\titlespacing*{\paragraph}   {0pt}{0.05ex plus 0.05ex minus .2ex}{0.5em}
\DeclareSymbolFont{symbols4}{LS1}{stixbb}{m}{it}
\DeclareMathSymbol{\hexagonblack}   {\mathord}{symbols4}{"DE}
\definecolor{mygreen}{rgb}{0,0.65,0.66}
\definecolor{ckcolor}{HTML}{b74f4f}
\definecolor{dccolor}{HTML}{FFA500}
\newcommand{\innp}[1]{\left\langle #1 \right\rangle}
\newcommand{\mM}{\mathbf{M}}
\newcommand{\ones}{\mathbf{1}}
\newcommand{\zeros}{\textbf{0}}
\newcommand{\vx}{\mathbf{x}}
\newcommand{\vh}{\mathbf{h}}
\newcommand{\vb}{\mathbf{b}}
\newcommand{\cX}{\mathcal{X}}
\newcommand{\cY}{\mathcal{Y}}
\newcommand{\cZ}{\mathcal{Z}}
\newcommand{\cE}{\mathcal{E}}
\newcommand{\cJ}{\mathcal{J}}
\newcommand{\cJX}{\mathcal{J_X}}
\newcommand{\cJY}{\mathcal{J_Y}}
\newcommand{\cJZ}{\mathcal{J_Z}}
\newcommand{\cC}{\mathcal{C}}
\newcommand{\vxh}{\mathbf{\hat{x}}}
\newcommand{\vyh}{\mathbf{\hat{y}}}
\newcommand{\vxt}{\mathbf{\widetilde{x}}}
\newcommand{\vyt}{\mathbf{\widetilde{y}}}
\newcommand{\vxb}{\mathbf{\bar{x}}}
\newcommand{\vyb}{\mathbf{\bar{y}}}
\newcommand{\vy}{\mathbf{y}}
\newcommand{\vz}{\mathbf{z}}
\newcommand{\vv}{\mathbf{v}}
\newcommand{\vg}{\mathbf{g}}
\newcommand{\vu}{\mathbf{u}}
\newcommand{\rr}{\mathbb{R}}
\newcommand{\Gap}{\mathrm{Gap}}
\def\mathcolor#1#{\@mathcolor{#1}}
\def\@mathcolor#1#2#3{%
  \protect\leavevmode
  \begingroup
    \color#1{#2}#3%
  \endgroup
}
\newcommand*{\vsepfbox}[1]{%
  \begingroup
    \sbox0{\fbox{#1}}%
    \setlength{\fboxrule}{0pt}%
    \mbox{\kern-\fboxsep\fbox{\unhbox0}\kern-\fboxsep}%
  \endgroup
}
\theoremstyle{plain} \numberwithin{equation}{section}
\newtheorem{theorem}{Theorem}[section]
\numberwithin{theorem}{section}
\newtheorem{lemma}[theorem]{Lemma}
\newtheorem{proposition}[theorem]{Proposition}
\theoremstyle{definition}
\newtheorem{definition}[theorem]{Definition}
\DeclareMathOperator*{\argmin}{argmin}
\DeclareMathOperator*{\argmax}{argmax}
\newcommand{\subalign}[1]{%
  \vcenter{%
    \Let@ \restore@math@cr \default@tag
    \baselineskip\fontdimen10 \scriptfont\tw@
    \advance\baselineskip\fontdimen12 \scriptfont\tw@
    \lineskip\thr@@\fontdimen8 \scriptfont\thr@@
    \lineskiplimit\lineskip
    \ialign{\hfil$\m@th\scriptstyle##$&$\m@th\scriptstyle{}##$\hfil\crcr
      #1\crcr
    }%
  }%
}
\newcommand{\cfrp}{CFR$^+$}
\newcommand{\pcfrp}{PCFR$^+$}
\newcommand{\rmp}{\textsc{RM}$^+$}
\newcommand{\prmp}{PRM$^+$}
\newcommand{\mprox}{MP}
\newcommand{\ecpda}{ECyclicPDA}
\title{Block-Coordinate Methods and Restarting for Solving Extensive-Form Games\thanks{Authors are ordered alphabetically.}}
\author{
Darshan Chakrabarti \\
IEOR Department \\
Columbia University \\
\texttt{dc3595@columbia.edu} \\
\and
Jelena Diakonikolas \\
Department of Computer Sciences \\
University of Wisconsin-Madison \\
\texttt{jelena@cs.wisc.edu} \\
\and
Christian Kroer \\
IEOR Department \\
Columbia University \\
\texttt{ck2945@columbia.edu} \\
}
\begin{document}

\maketitle

\begin{abstract}
Coordinate descent methods are popular in machine learning and optimization for their simple sparse updates and excellent practical performance. 
In the context of large-scale sequential game solving, these same properties would be attractive, but until now no such methods were known, because the strategy spaces do not satisfy the typical separable block structure exploited by such methods.
We present the first cyclic coordinate-descent-like method for the polytope of sequence-form strategies, which form the strategy spaces for the players in an extensive-form game (EFG). 
Our method exploits the recursive structure of the proximal update induced by what are known as dilated regularizers, in order to allow for a pseudo block-wise update.
We show that our method enjoys a $O(1/T)$ convergence rate to a two-player zero-sum Nash equilibrium, while avoiding the worst-case polynomial scaling with the number of blocks common to cyclic methods. We empirically show that our algorithm usually performs better than other state-of-the-art first-order methods (i.e., mirror prox), and occasionally can even beat \cfrp, a state-of-the-art algorithm for numerical equilibrium computation in zero-sum EFGs. 
We then introduce a \emph{restarting} heuristic for EFG solving. We show empirically that restarting can lead to speedups, sometimes huge, both for our cyclic method, as well as for existing methods such as mirror prox and predictive \cfrp.
\end{abstract}

\section{Introduction}
Extensive-form games (EFGs) are a broad class of game-theoretic models which are played on a tree. They can compactly model both simultaneous and sequential moves, private and/or imperfect information, and stochasticity. Equilibrium computation for a two-player zero-sum EFG can be formulated as the following bilinear saddle-point problem (BSPP)
\begin{equation}\tag{PD}
\begin{aligned}\label{eq:problem-pd}
\min_{\vx\in\cX}\max_{\vy \in \cY} \innp{\mM\vx, \vy}. 
\end{aligned}
\end{equation}
Here, the set of strategies $\cX, \cY$ for the $\vx$ and $\vy$ players are convex polytopes known as \emph{sequence-form polytopes}~\citep{stengel1996efficient}.
The \eqref{eq:problem-pd} formulation lends itself to first-order methods (FOMs)~\citep{kroer2020faster,farina2021better}, linear programming~\citep{stengel1996efficient}, and online learning-based approaches~\citep{zinkevich2007regret,tammelin2015solving,brown2019solving,farina2021faster,farina2022kernelized,bai2022efficient}, since the feasible sets are convex and compact polytopes, and the objective is bilinear.

A common approach for solving BSPPs is by using first-order methods, where local gradient information is used to iteratively improve the solution in order to converge to an equilibrium asymptotically. 
In the game-solving context, such methods rely on two oracles: a \emph{first-order oracle} that returns a (sub)gradient at the current pair of strategies, and a pair of \emph{prox oracles} for the strategy spaces $\cX,\cY$, which allow one to perform a generalized form of projected gradient descent steps on $\cX,\cY$.
These prox oracles are usually constructed through the choice of an appropriate \emph{regularizer}. For EFGs, it is standard to focus on regularizers for which the prox oracle can be computed in linear time with respect to the size of the polytope, which is only known to be achievable through what is known as \emph{dilated} regularizers~\citep{hoda2010smoothing}. 
Most first-order methods for EFGs require full-tree traversals for the first-order oracle, and full traversals of the decision sets for the prox computation, before making a strategy update for each player.
For large EFGs these full traversals, especially for the first-order oracle, can be very expensive, and it may be desirable to make strategy updates before a full traversal has been performed, in order to more rapidly incorporate partial first-order information.

In other settings, one commonly used approach for solving large-scale problems is through \emph{coordinate methods} (CMs) \citep{nesterov2012efficiency, wright2015coordinate}.  These methods involve computing the gradient for a restricted set of coordinates at each iteration of the algorithm, and using these partial gradients to construct descent directions. The convergence rate of these methods typically is able to match the rate of full gradient methods. However, in some cases they may exhibit worse runtime due to constants introduced by the method. In spite of this, they often serve practical benefits of being more time and space efficient, and enabling distributed computation \citep{wu2008coordinate, nesterov2012efficiency, zhang2015stochastic, liu2014asynchronous, friedman2010regularization, mazumder2011sparsenet, lin2015accelerated, allen2016even, gurbuzbalaban2017cyclic, alacaoglu2017smooth, diakonikolas2018alternating}.

Generally, coordinate descent methods assume that the problem is separable, i.e., there exists a partition of the coordinates into blocks so that the feasible set can be decomposed as a Cartesian product of feasible sets, one for each block.
This assumption is crucial, as it allows the methods to perform block-wise updates without worrying about feasibility, and it simplifies the convergence analysis.
Extending CDMs to EFGs is non-trivial because the constraints of the sequence-form polytope do not possess this separable structure; instead the strategy space is such that the decision at a given decision point affects all variables that occur after that decision.
We are only aware of a couple examples in the literature where separability is not assumed~\citep{aberdam2021accelerated, chorobura2022random}, but those methods require strong assumptions which are not applicable in EFG settings.

\paragraph{Contributions.}
We propose the \textit{Extrapolated Cyclic Primal-Dual Algorithm (\ecpda{})}. Our algorithm is the first cyclic coordinate method for the polytope of sequence-form strategies. It achieves a $O(1/T)$ convergence rate to a two-player zero-sum Nash equilibrium, with no dependence on the number of blocks; this, is in contrast with the worst-case polynomial dependence on the number of blocks that commonly appears in convergence rate guarantees for cyclic methods. Our method crucially leverages the recursive structure of the prox updates induced by dilated regularizers. In contrast to true cyclic (block) coordinate descent methods, the intermediate iterates generated during one iteration of \ecpda{} are not feasible because of the non-separable nature of the constraints of sequence-form polytopes. Due to this infeasibility we refer to our updates as being pseudo-block updates. The only information that is fully determined after one pseudo-block update, is the behavioral strategy for all sequences at decision points in the block that was just considered. The behavioral strategy is converted back to sequence-form at the end of a full iteration of our algorithm.

At a very high level, our algorithm is inspired by the CODER algorithm due to \citet{song2021fast}. However, there are several important differences due to the specific structure of the bilinear problem \eqref{eq:problem-pd} that we solve. 
First of all, the CODER algorithm is not directly applicable to our setting, as the feasible set (treeplex) that appears in our problem formulation is not separable. Additionally, CODER only considers Euclidean setups with quadratic regularizers, whereas our work considers more general normed settings; in particular, the $\ell_1$ setup is of primary interest for our problem setup, since it yields a much better dependence on the game size.  

These two issues regarding the non-separability of the feasible set and the more general normed spaces and regularizers are handled in our work by (i) considering dilated regularizers, which allow for blockwise (up to scaling) updates in a bottom-up fashion, respecting the treeplex ordering; and (ii) introducing different extrapolation steps (see Lines 10 and 13 in Algorithm~\ref{algo:main}) that are unique to our work and specific to the bilinear EFG problem formulation. Additionally, our special problem structure and the choice of the extrapolation sequences $\vxt_k$ and $\vyt_k$ allows us to remove any nonstandard Lipschitz assumptions used in \citet{song2021fast}. Notably, unlike \citet{song2021fast} and essentially all the work on \emph{cyclic} methods we are aware of, which pay polynomially for the number of blocks in the convergence bound, our convergence bound in the $\ell_1$ setting is never worse than the optimal bound of full vector-update methods such as Mirror-Prox \citep{nemirovski2004prox} and the Dual Extrapolation Method~\citep{nesterov2007dual}, which we consider a major contribution of our work.

Numerically, we demonstrate that our algorithm performs better than mirror prox (\mprox{}), and can be competitive with \cfrp{} and its variants on certain domains.
We also 
propose the use of \emph{adaptive restarting} as a general heuristic tool for EFG solving: whenever an EFG solver constructs a solution with duality gap at most a constant fraction of its initial value since the last restart, we restart it and initialize the new run with the output solution at restart. 
Restarting is theoretically supported by the fact that BSPPs possess the \emph{sharpness property}~\citep{applegate2022faster, fercoq2023quadratic,tseng1995linear,gilpin2012first},
and restarting combined with certain \emph{Euclidean-based} FOMs leads to a linear convergence rate under sharpness~\citep{gilpin2012first,applegate2022faster}. 
We show that with restarting, it is possible for our ECyclicPDA methods to outperform \cfrp{} on some games; this is the first time that a FOM has been observed to outperform \cfrp{} on non-trivial EFGs.
Somewhat surprisingly, we then show that for some games, restarting can drastically speed up \cfrp{} as well. In particular, we find that on one game, \cfrp{} with restarting exhibits a linear convergence rate, and so does a recent predictive variant of \cfrp{}~\citep{farina2021faster}, on the same game and on an additional one. 

\paragraph{Related Work.} \label{sec:related-work}

CMs have been widely studied in the past decade and a half~\citep{nesterov2012efficiency, wright2015coordinate, zhang2015stochastic, liu2014asynchronous, friedman2010regularization, mazumder2011sparsenet, lin2015accelerated, allen2016even, gurbuzbalaban2017cyclic, alacaoglu2017smooth, diakonikolas2018alternating, shi2016primer, saha2013nonasymptotic, song2021fast,aberdam2021accelerated, chorobura2022random,beck2013convergence}. CMs can be grouped into three broad classes \citep{shi2016primer}: greedy methods, which greedily select coordinates that will lead to the largest progress; randomized methods, which select (blocks of) coordinates according to a probability distribution over the blocks; and cyclic methods, which make updates in cyclic orders. Because greedy methods typically require full gradient evaluation (to make the greedy selection), the focus in the literature has primarily been on randomized (RCMs) and cyclic (CCMs) variants. As discussed before, RCMs are not applicable to our setting so we focus on CCMs. However, establishing convergence arguments for CCMs through connections with convergence arguments to full gradient methods is difficult. Some guarantees have been provided in the literature, either making restrictive assumptions~\citep{saha2013nonasymptotic} or by treating the cyclical coordinate gradient as an approximation of a full gradient~\citep{beck2013convergence}, and thus incurring a linear dependence on the number of blocks in the convergence guarantee. ~\citet{song2021fast} were the first make an improvement on reducing the dependence on the number of blocks by using a novel extrapolation strategy and introducing new block Lipschitz assumptions. That paper was the main inspiration for our work, but inapplicable to our setting, thus necessitating new technical ideas, as already discussed.

There has also been significant work on FOMs for two-player zero-sum EFG solving.
Because this is a BSPP, off-the-shelf FOMs for BSPPs can be applied, with the caveat that proximal oracles are required.
The standard Euclidean distance has been used in some cases~\citep{gilpin2012first}, but it requires solving a projection problem that takes $O(n\log^2n)$ time, where $n$ is the dimension of a player's decision space~\citep{gilpin2012first,farina2022near}. While this is ``nearly'' linear time, such projections have not been used much in practice.
Proximal oracles have instead been based on dilated regularizers~\citep{hoda2010smoothing}, which lead to a proximal update that can be performed with a single pass over the decision space. With the dilated entropy regularizer, this can be performed in linear time, and this regularizer leads to the strongest bounds on game constants that impact the convergence rate of proximal-oracle-based FOMs~\citep{kroer2020faster,farina2021better, farina2019optimistic}.

More recently, it has been shown that a specialized \emph{kernelization} can be used to achieve linear-time proximal updates and stronger convergence rates specifically for the dilated entropy with optimistic online mirror descent through a correspondence with optimistic multiplicative weights on the exponentially-many vertices of the decision polytope~\citep{farina2022kernelized,bai2022efficient}. Yet this approach was shown to have somewhat disappointing numerical performance in \citet{farina2022kernelized}, and thus is less important practically despite its theoretical significance.

A second popular approach is the counterfactual regret minimization (CFR) framework, which decomposes regret minimization on the EFG decision sets into local simplex-based regret minimization~\citep{zinkevich2007regret}. 
In theory, CFR-based results have mostly led to an inferior $T^{-1/2}$ rate of convergence, but in practice the CFR framework instantiated with \emph{regret matching$^+$} (\rmp)~\citep{tammelin2014solving} or \emph{predictive \rmp} (\prmp)~\citep{farina2021faster} is the fastest approach for essentially every EFG setting. 

A completely different approach for first-order-based updates is the CFR framework, which decomposes regret minimization on the EFG decision sets into local simplex-based regret minimization~\citep{zinkevich2007regret}. 
In theory, CFR-based results have mostly led to an inferior $T^{-1/2}$ rate of convergence, but in practice the CFR framework instantiated with \emph{regret matching$^+$} (\rmp)~\citep{tammelin2014solving} or \emph{predictive \rmp} (\prmp)~\citep{farina2021faster} is the fastest approach for essentially every EFG setting. \rmp{} is often fastest for ``poker-like'' EFGs, while \prmp{} is often fastest for other classes of games~\citep{farina2021faster}.
Improved rates on the order of $T^{-3/4}$~\citep{farina2019stable} and $\log T / T$~\citep{anagnostides2022near} have been achieved within the CFR framework, but only while using regret minimizers that lead to significantly worse practical performance (in particular, numerically these perform worse  than the best $1/T$ FOMs such as mirror prox with appropriate stepsize tuning).

The most competitive FOM-based approaches for practical performance are based on dilated regularizers~\citep{kroer2018solving,farina2021better}, but these have not been able to beat \cfrp{} on EFG settings; we show for the first time that it \emph{is} possible to beat \cfrp{} through a combination of block-coordinate updates and restarting, at least on some games.

In the last few years there has been a growing literature on \emph{last-iterate convergence} in EFGs. There, the goal is to show that one can converge to an equilibrium without averaging the iterates generated by a FOM or CFR-based method.
It has long been known that with the Euclidean regularizer, it is possible to converge at a linear rate in last iterate with e.g., the \emph{extragradient method} (a.k.a.\ mirror prox with the Euclidean regularizer) on BSPPs with polyhedral decision sets, as they are in EFGs~\citep{tseng1995linear,wei2021last,gilpin2012first}.
More recently, it has been shown that a linear rate can be achieved with certain dilated regularizers~\citep{lee2021last}, with the kernelization approach of \citet{farina2022kernelized}, and in a regularized CFR setup~\citep{liu2023power}.
At this stage, however, these last-iterate results are of greater theoretical significance than practical significance, because the linear rate often does not occur until after quite many iterations, and typically the methods do not match the performance of ergodic methods at reasonable time scales. For this reason, we do not compare to last-iterate algorithms in our experiments.

\section{Notation and Preliminaries}\label{sec:prelims}

In this section, we provide the necessary background and notation subsequently used to describe and analyze our algorithm presented in the following section. As discussed in the introduction, our focus is on bilinear problems that can be expressed as \eqref{eq:problem-pd}.

\subsection{Notation and Optimization Background}

We use bold lowercase letters to denote vectors and bold uppercase letters to denote matrices. We use $\|\cdot\|$ to denote an arbitrary $\ell_p$ norm for $p \geq 1$ applied to a vector in either $\rr^m$ or $\rr^n,$ depending on the context. The norm dual to $\|\cdot\|$ is denoted by $\|\cdot\|_*$ and defined in the standard way as $\|\vz\|_* = \sup_{\vx \neq \zeros} \frac{\innp{\vz, \vx}}{\|\vx\|},$ where $\innp{\vz, \vx}$ denotes the 
standard inner product. In particular, for $\|\cdot\| = \|\cdot\|_p$, where $p \geq 1,$ we have $\|\cdot\|_* = \|\cdot\|_{p^*},$ where $\frac{1}{p} + \frac{1}{p^*} = 1.$ We further use $\|\cdot\|_*$ to denote the induced matrix norm defined by $\|\mM\|_* = \sup_{\vx \neq \zeros}\frac{\|\mM \vx\|_*}{\|\vx\|}.$ In particular, for the Euclidean norm $\|\cdot\| = \|\cdot\|_2,$ the dual norm $\|\cdot\|_* = \|\cdot\|_2$ is also the Euclidean norm, and $\|\mM\|_* = \|\mM\|_2$ is the matrix operator norm. For the $\ell_1$ norm $\|\cdot\| = \|\cdot\|_1$, the dual norm is the $\ell_{\infty}$-norm, $\|\cdot\|_* = \|\cdot\|_{\infty},$ while the matrix norm is $\|\mM\|_* = \|\mM\|_{\infty \to 1} = \sup_{\vx \neq \zeros}\frac{\|\mM \vx\|_{\infty}}{\|\vx\|_1} = \max_{i, j}|M_{ij}|.$ 
We use $\Delta^n = \{\vx \in \rr^n: \vx \geq \zeros, \innp{\ones, \vx} = 1\}$ to denote the probability simplex in $n$ dimensions.

\paragraph{Primal-dual Gap.}
Given $\vx \in \rr^d$, the \emph{primal value} of the problem~\eqref{eq:problem-pd} is
$
    \max_{\vv\in\cX} \, \innp{\mM\vx, \vv}.
$ 
Similarly, the \emph{dual value} of~\eqref{eq:problem-pd} is defined by
$
    \min_{\vu\in\cY} \, \innp{\mM \vu, \vy}. 
$ 
Given a primal-dual pair $(\vx, \vy) \in \cX \times \cY,$ the primal-dual gap (or saddle-point gap) is  defined by
\begin{equation}\notag
    \Gap(\vx, \vy) = \max_{\vv\in\cX} \, \innp{\mM\vx, \vv} - \min_{\vu\in\cY} \, \innp{\mM \vu, \vy} = \max_{(\vu, \vv) \in \cX \times \cY}\Gap^{\vu, \vv}(\vx, \vy),  
\end{equation}
where we define
$
\Gap^{\vu, \vv}(\vx, \vy) = \innp{\mM\vx, \vv} - \innp{\mM \vu, \vy}. 
$
For our analysis, it is useful to work with the relaxed gap $\Gap^{\vu, \vv}(\vx, \vy)$ for some arbitrary but fixed $\vu \in \cX, \vv \in \cY,$ and then draw conclusions about a candidate solution by making concrete choices of $\vu, \vv.$

\paragraph{Definitions and Facts from Convex Analysis.} 

In this paper, we primarily work with convex functions $f: \rr^n \to \rr \cup\{\pm \infty\}$ that are differentiable on the interior of their domain. 
We say that $f$ is $c_f$-strongly convex w.r.t.\ a norm $\|\cdot\|$ if $\forall \vy \in \rr^n$, $\forall \vx \in \mathrm{int\,dom} f$,
\begin{equation*}
    f(\vy) \geq f(\vx) + \innp{\nabla f(\vx), \vy - \vx} + \frac{c_f}{2}\|\vy - \vx\|^2.
\end{equation*}
We will also need convex conjugates and Bregman divergences.
  Given an extended real valued function $f: \rr^n \to \rr \cup\{\pm \infty\},$ its convex conjugate   is defined by $f^*(\vz) = \sup_{\vz \in \rr^n}\{\innp{\vz, \vx} - f(\vx)\}.$
    Let $f: \rr^n \to \rr \cup\{\pm \infty\}$ be a function that is differentiable on the interior of its domain. Given $\vy \in \rr^n$ and $\vx \in \mathrm{int\,dom} f$, the Bregman divergence $D_f(\vy, \vx)$ is defined by $D_f(\vy, \vx) = f(\vy) - f(\vx) - \innp{\nabla f(\vx), \vy - \vx}$.
If the function $f$ is $c_f$-strongly convex, then $D_f(\vy, \vx) \geq \frac{c_f}{2}\|\vy - \vx\|^2.$

\subsection{Extensive-Form Games: Background and Additional Notation}

Extensive form games are represented by game trees. 
 Each node $v$ in the game tree belongs to exactly one player $i \in \{1, \dots, n\} \cup \{\mathsf{c}\}$ whose turn it is to move. Player $\mathsf{c}$ is a special player called the chance player; it is used to denote random events that happen in the game, such as drawing a card from a deck or tossing a coin. At terminal nodes of the game, players are assigned payoffs. We focus on two-player zero-sum games, where $n = 2$ and payoffs sum to zero. Private information is modeled using information sets (infosets): a player cannot distinguish between nodes in the same infoset, so the set of actions available to them must be the same at each node in the infoset.
\paragraph{Treeplexes.}
The decision problem for a player in a perfect recall EFG can be described as follows. There exists a set of decision points $\cJ$, and at each decision point $j$ the player has a set of actions $A_j$ with $|A_j|=n_j$ actions in total. These decision points coincide with infosets in the EFG. Without loss of generality, we let there be a single root decision point, representing the first decision the player makes in the game. The choice to play an action $a \in A_j$ for a decision point $j \in \cJ$ is represented using a sequence $(j,a)$, and after playing this sequence, the set of possible next decision points is denoted by $\cC_{j,a}$ (which may be empty in case the game terminates).
The set of decisions form a tree, meaning that $\cC_{j,a} \cap \cC_{j',a'} = \emptyset$ unless $j=j'$ and $a=a'$, this is known as \emph{perfect recall}.
The last sequence (necessarily unique) encountered on the path from the root to decision point $j$ is denoted by $p_j$.
Given a decision point $j' \in \cJ$, perfect recall means that there is a single parent sequence, denoted by  $p_{j'}$, which is the last sequence $(j, a)$ encountered on the path from the root decision point to $j$; if no such sequence exists (i.e., $j$ is the root decision point), we let $p_j = \emptyset$. 
We define $\downarrow j$ as the set consisting of all decision points that can be reached from $j$. As an example, consider the treeplex of Kuhn poker~\citep{kuhn53extensive} adapted from~\citep{farina2019optimistic} shown in \Cref{fig:kuhn_treeplex}. Kuhn poker is a game played with a three card deck: jack, queen, and king. In this case, for example, we have $\cJX = \{0, 1, 2, 3, 4, 5, 6\}$, $p_0 = \emptyset$, $p_1 = p_2 = p_3 = (0, \text{start})$, $p_4 = (1, \text{check})$, $p_5 = (2, \text{check})$, $p_6 = (3, \text{check})$, $A_0 = \{\text{start}\}$, $A_1 = A_2 = A_3 = \{check, raise\}$, $A_4 = A_5 = A_6 = \{\text{fold}, \text{call}\}$, $\cC_{(0, \text{start})} = \{1, 2, 3\}$, $\cC_{(1, \text{raise})} = \cC_{(1, \text{raise})} = \cC_{(2, \text{raise})} = \cC_{(3, \text{raise})} = \cC_{(4, \text{fold})} = \cC_{(5, \text{fold})} = \cC_{(6, \text{fold})} = \cC_{(4, \text{call})} = \cC_{(5, \text{call})} = \cC_{(6, \text{call})} = \emptyset$, $\downarrow 0 = \cJX$, $\downarrow 1 = \{1, 4\}$, $\downarrow 2 = \{2, 5\}$, $\downarrow 3 = \{3, 6\}$, $\downarrow 4 = \{4\}$, $\downarrow 5 = \{5\}$, $\downarrow 6 = \{6\}$. 

\begin{figure}[h]
\includegraphics[trim={0cm 0 0 0},clip,width=0.5\textwidth]{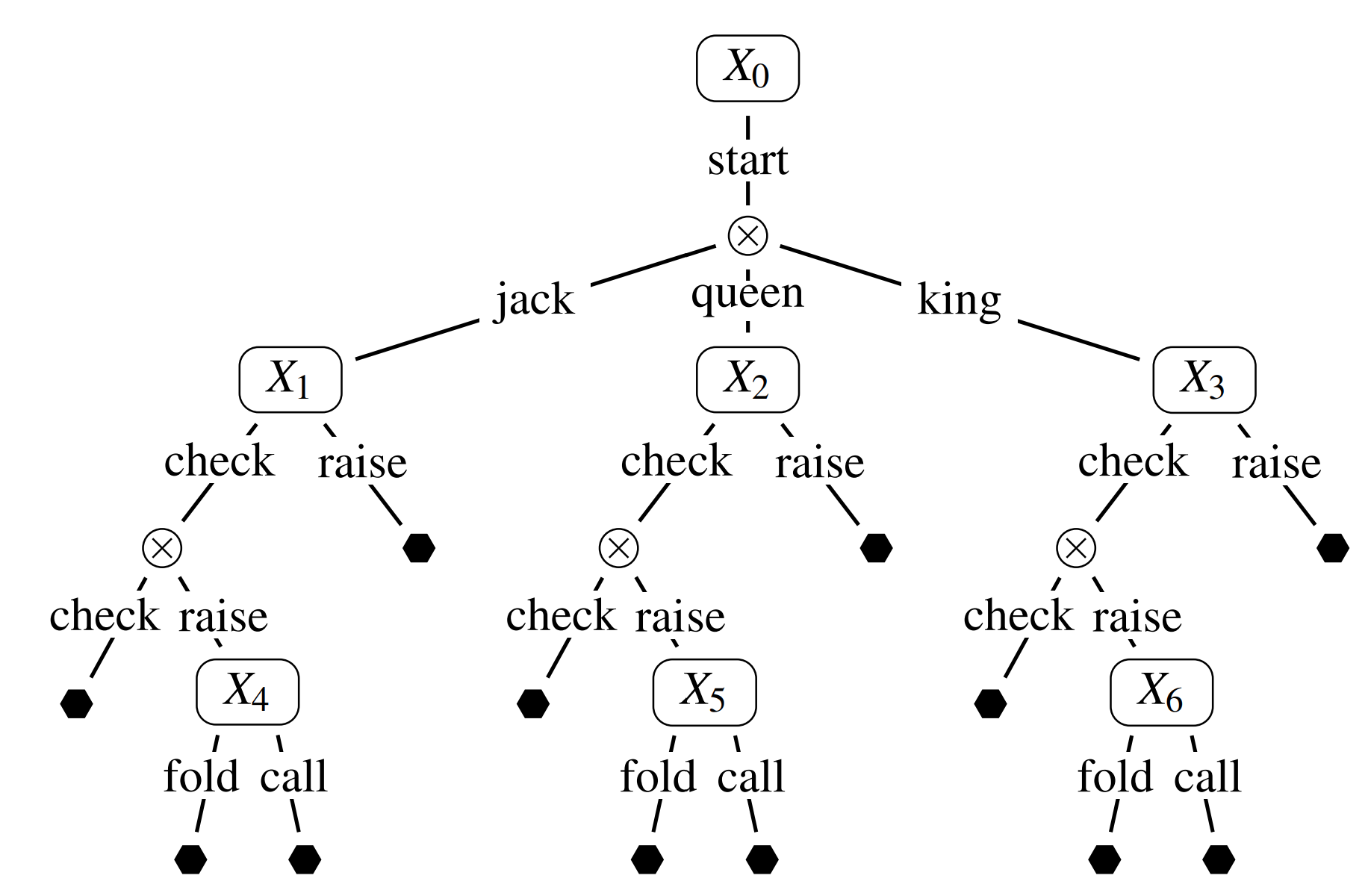}
\caption{The sequential decision problem for the first player in Kuhn poker. $\protect \scriptstyle {\hexagonblack}$ represents the end of the decision process and $\otimes$ represents an  observation (which may lead to multiple decision points). Adapted from ~\citep{farina2019optimistic}.}
\label{fig:kuhn_treeplex}
\end{figure}
  
 The set of strategies for a player can be characterized using the \emph{sequence-form}, where the value of the decision variable assigned to playing the sequence $(j,a)$ is the product of the decision variable assigned to playing the parent sequence $p_j$ and the probability of playing action $a$ when at $j$~\citep{stengel1996efficient}. The set of all sequence-form strategies of a player form a polytope known as the sequence-form polytope. Sequence-form polytopes fall into a class of polytopes known as treeplexes~\citep{hoda2010smoothing}, which can be characterized inductively using convex hull and Cartesian product operations:

\begin{definition}[Treeplex] A treeplex $\cX$ for a player can be characterized recursively as follows, where $r$ is the the root decision point for a player.
\begin{align*}
\cX_{j,a} &= \prod_{j' \in \cC_{j, a}} \cX_{\downarrow j'}, \\
\cX_{\downarrow{j}} &= \{(\lambda_1, \dots , \lambda_{|A_j|}, \lambda_{1} \vx_{1}, \dots, \lambda_{|A_j|} \vx_{|A_j|} : (\lambda_1, \dots, \lambda_{|A_j|}) \in \Delta^{|A_j|}, \vx_{a} \in \cX_{j, a} \}, \\
\cX &= \{1\} \times \cX_{\downarrow{r}}.
\end{align*}
\end{definition}

This formulation allows the expected loss of a player to be formulated as a bilinear function $\langle \mM\vx, \vy\rangle$ of players' strategies $\vx,\vy$. This gives rise to the BSPP in \Cref{eq:problem-pd}, and the set of saddle points of that BSPP are exactly the set of Nash equilibria of the EFG.
The \emph{payoff matrix} $\mM$ is a sparse matrix, whose nonzeroes correspond to the set of leaf nodes of the game tree.

\paragraph{Indexing Notation.}
A sequence-form strategy of a player can be written as a vector $\vv$, with an entry for each sequence $(j, a)$. We use $\vv^{j}$ to denote the subset of size $|A_j|$ of entries of $\vv$ that correspond to sequences $(j, a)$ formed by taking actions $a \in A_j$
and let $\vv^{\downarrow j}$ denote the subset of entries of $\vv$ that are indexed by sequences that occur in the subtreeplex rooted at $j$. Additionally, we use $v^{p_j}$ to denote the (scalar) value of the parent sequence of decision point $j$. By convention, for the root decision point $j,$ we let $v^{p_j} = 1.$ Observe that for any $j \in \cJ$, $\vv^j/v^{p_j}$ is in the probability simplex.

Given a treeplex $\cZ$ we denote by $\cJZ$ the set of infosets for this treeplex. We say that a partition of $\cJZ$ into $k \leq |\cJZ|$ sets $\cJZ^{(1)}, \dots, \cJZ^{(k)}$ respects the treeplex ordering if for any two sets $\cJZ^{(i)}$, $\cJZ^{(i')}$ with $i < i'$ and any two infosets $j \in \cJZ^{(i)},$ $j' \in \cJZ^{(i')}$, $j$ does not intersect the path from $j'$ to the root decision point. 
The set of infosets for the player $\vx$ is denoted by $\cJX,$ while the set of infosets for player $\vy$ is denoted by $\cJY.$ We assume that $\cJX$ and $\cJY$ are partitioned into $s$ nonempty sets $\cJX^{(1)}, \cJX^{(2)}, \dots, \cJX^{(s)}$ and $\cJY^{(1)}, \cJY^{(2)}, \dots, \cJY^{(s)},$  where $s\leq \min\{|\cJX|,|\cJY|\}$ and the ordering of the sets in the two partitions respect the treeplex ordering of 
$\cX, \cY$, respectively. 

 Given a pair $(t, t'),$ we use $\mM_{t,t'}$ to denote the full-dimensional ($m \times n$) matrix obtained from the matrix $\mM$ by keeping all entries indexed by $\cJX^{(t)}$ and $\cJY^{(t')},$ and zeroing out the rest. When in place of $t$ or $t'$ we use ``$:$'', it corresponds to keeping as non-zeros all rows (for the first index) or all columns (for the second index). In particular, $\mM_{t, :}$ is the matrix that keeps all rows of $\mM$ indexed by $\cJX^{(t)}$ intact and zeros out the rest. Further, notation $\mM_{t', t : s}$ is used to indicate that we select rows indexed by $\cJX^{(t')}$ and all columns of $\mM$ indexed by $\cJY^{(t)}, \cJY^{(t+1)}, \dots, \cJY^{(s)}$, while we zero out the rest; similarly for $\mM_{t:s, t'}$. Notation $\mM_{t', 1 : t}$ is used to indicate that we select rows indexed by $\cJX^{(t')}$ and all columns of $\mM$ indexed by $\cJY^{(1)}, \cJY^{(2)}, \dots, \cJY^{(t)}$, while we zero out the rest; similarly for $\mM_{1:t, t'}$. 
Given a vector $\vx \in \cX,$ $\vx^{(t)}$ denotes the entries of $\vx$ indexed by the elements of $\cJX^{(t)};$ similarly, for $\vy \in \cY,$ $\vy^{(t)}$ denotes the entries of $\vy$ indexed by the elements of $\cJY^{(t)}.$

Additionally, we use $\mM^{(t, t')}$ to denote the submatrix of $\mM$ obtained by selecting rows indexed by $\cJX^{(t)}$ and columns indexed by $\cJY^{(t')}$. $\mM^{(t, t')}$ is $(p \times q)$-dimensional, for $p = \sum_{j \in \cJX^{(t)}} |A_j|$ and $q = \sum_{j \in \cJY^{(t')}} |A_j|.$  
Notation ``$:$'' has the same meaning as in the previous paragraph.

\paragraph{Dilated Regularizers.}
We assume access to strongly convex functions 
$\phi: \cX \to \rr$ and $\psi: \cY \to \rr$ with known strong convexity parameters $c_\phi > 0$ and $c_\psi > 0$, and that are continuously differentiable on the interiors of their respective domains. We further assume  that these functions are \emph{nice} as defined by \citet{farina2021better}: their gradients and the gradients of their convex conjugates can be computed in time linear (or nearly linear) in the dimension of the treeplex.

A dilated regularizer is a framework for constructing nice regularizing functions for treeplexes. It makes use of the inductive characterization of a treeplex via Cartesian product and convex hull operations to generalize from the local simplex structure of the sequence-form polytope at a decision point to the entire sequence-form polytope. In particular, given a local ``nice'' regularizer $\phi^j$ for each decision point $j$, a dilated regularizer for the treeplex can be defined as $\phi(\vx) = \sum_{j \in \cJX} x^{p_j} \phi^j (\frac{\vx^j}{x^{p_j}})$.

The key property of these dilated regularizing functions is that the prox computations of the form 
    $\vx_k = \argmin_{\vx \in \cX}\{\innp{\vh, \vx} + D_{\phi}(\vx_k, \vx_{k-1})\}$
decompose into bottom-up updates, where, up to a scaling factor, each set of coordinates from set $\cJX^{(t)}$ can be computed solely based on the coordinates of $\vx_k$ from sets $\cJX^{(1)}, \dots, \cJX^{(t-1)}$ and coordinates of $\vg$ from sets $\cJX^{(1)}, \dots, \cJX^{(t)}.$ 
Concretely, the recursive structure of the prox update is as follows (this was originally shown by \citep{hoda2010smoothing}, here we show a variation from \citet{farina2019optimistic}):
\begin{proposition}[\citet{farina2019optimistic}]
\label{prop:prox_update}
A prox update to compute $\vx_k$, with gradient $\vh$ and center $\vx_{k-1}$ on a treeplex $\cX$ using a Bregman divergence constructed from a dilated DGF $\phi$ can be decomposed into local prox updates at each decision point $j \in \cJX$ as follows:
\begin{align*}
\vx_k^{j} &= \vx_k^{p_j} \cdot \argmin_{\vb^j \in \Delta^{n_j}} \bigg \{\big\langle \vh^j + \hat{\vh}^j, \vb^j \big\rangle  + D_{\phi^j}\bigg(\vb^j , \frac{\vx^{j}_{k-1}}{\vx^{p_j}_{k-1}}\bigg) \bigg \}, \\
{\hat{h}}^{(j,a)} &=  \sum_{j' \in \mathcal{C}_{j,a}}\bigg[\phi^{\downarrow j'^*}\big(-\vh^{\downarrow j} + \nabla \phi^{\downarrow j'}\big({x}^{\downarrow j'}_{k-1}\big)\big) - \phi^{j'}\bigg(\frac{{\vx}^{j'}_{k-1}}{{x}^{(j,a)}_{k-1}}\bigg) + \bigg\langle \nabla \phi^{j'}\bigg(\frac{{\vx}^{j'}_{k-1}}{{x}^{(j,a)}_{k-1}}\bigg), \frac{{\vx}^{j'}_{k-1}}{{x}^{(j,a)}_{k-1}}  \bigg\rangle \bigg]. \nonumber
\end{align*}
 \end{proposition}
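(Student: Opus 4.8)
The plan is to establish the decomposition by induction over the treeplex $\cX$, processing decision points from the leaves up to the root $r$, exploiting two structural facts: the recursive product/convex-hull description of $\cX$, and the observation that each summand $x^{p_j}\phi^j(\vx^j/x^{p_j})$ of the dilated regularizer is positively homogeneous of degree one jointly in $(x^{p_j},\vx^j)$. First I would rewrite the prox objective in a form that exposes its per-decision-point additivity: expanding the Bregman divergence, the $\vx$-dependent part of $\innp{\vh,\vx}+D_{\phi}(\vx,\vx_{k-1})$ equals $\phi(\vx)+\innp{\vh-\nabla\phi(\vx_{k-1}),\vx}$ up to an additive constant, and both $\phi(\vx)=\sum_{j\in\cJX}x^{p_j}\phi^j(\vx^j/x^{p_j})$ and the linear term split as sums indexed by $\cJX$. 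I would then introduce, for each decision point $j'$, the \emph{value-to-go} $V_{j'}(\mu)$ defined as the optimal value of the prox subproblem restricted to the subtreeplex $\cX_{\downarrow j'}$ when the parent sequence $p_{j'}$ is pinned to the value $\mu$.

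The crux is to show that $V_{j'}$ is affine in $\mu$, namely $V_{j'}(\mu)=\mu\,v_{j'}+c_{j'}$, and to identify the slope $v_{j'}$ with the bracketed expression defining $\hat h^{(j,a)}$. The affine form follows because, on the treeplex, every subtree coordinate scales linearly with $\mu$, the regularizer contribution is degree-one homogeneous in $(\mu,\vx^{\downarrow j'})$, the linear term is degree one in $\vx^{\downarrow j'}$, and the only non-homogeneous piece is the fixed constant coming from the center $\vx_{k-1}$ in the Bregman divergence. Evaluating at $\mu=1$ then identifies the slope with the negative optimized value of the subproblem, i.e.\ with $\phi^{\downarrow j'^*}$ evaluated at the dual point $-\vh^{\downarrow j'}+\nabla\phi^{\downarrow j'}(\vx^{\downarrow j'}_{k-1})$, while the remaining two terms $-\phi^{j'}(\vx^{j'}_{k-1}/x^{(j,a)}_{k-1})+\innp{\nabla\phi^{j'}(\vx^{j'}_{k-1}/x^{(j,a)}_{k-1}),\vx^{j'}_{k-1}/x^{(j,a)}_{k-1}}$ are exactly the center-normalization constants of the local Bregman divergence at the top of the subtreeplex $\downarrow j'$, with the center rescaled by the old parent value $x^{(j,a)}_{k-1}$.

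For the inductive step at $j$, I would substitute the children's value functions: for an action $a\in A_j$ the downstream value $\sum_{j'\in\cC_{j,a}}V_{j'}(x^{(j,a)})$ equals $x^{p_j}b^j_a\sum_{j'\in\cC_{j,a}}v_{j'}$ up to a constant independent of $\vb^j$, using $x^{(j,a)}=x^{p_j}b^j_a$ and the affine form of $V_{j'}$. The subproblem at $j$ then collapses, after factoring out $x^{p_j}$, to the simplex prox $\argmin_{\vb^j\in\Delta^{n_j}}\{\innp{\vh^j+\hat{\vh}^j,\vb^j}+D_{\phi^j}(\vb^j,\vx^j_{k-1}/x^{p_j}_{k-1})\}$ with $\hat h^{(j,a)}=\sum_{j'\in\cC_{j,a}}v_{j'}$, and multiplying its solution by $\vx_k^{p_j}$ yields $\vx_k^{j}$, recovering the claimed recursion. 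The base case is a leaf decision point, where $\cC_{j,a}=\emptyset$, so $\hat{\vh}^j=\zeros$ and the update is a plain simplex prox.

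I expect the main obstacle to be the bookkeeping of the successive renormalizations of the fixed center $\vx_{k-1}$: because the dilated regularizer normalizes the new iterate $\vx$ by its own parent value while the center is normalized by the old parent value $x^{(j,a)}_{k-1}$, one must verify that $\nabla\phi^{\downarrow j'}(\vx^{\downarrow j'}_{k-1})$ indeed coincides with the relevant block of $\nabla\phi(\vx_{k-1})$, and that the deeper constants $c_{j'}$ generated at each level are absorbed correctly into the recursively defined conjugate $\phi^{\downarrow j'^*}$ rather than double-counted. Tracking the exact signs and constants so that the three-term expression for $\hat h^{(j,a)}$ emerges cleanly, while confirming that these constants never affect the argmin determining $\vx_k^{j}$, is the delicate part; the homogeneity argument and the reduction to a simplex prox are routine once the normalizations are pinned down.
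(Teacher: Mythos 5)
The paper offers no proof of this proposition: it is imported from \citet{farina2019optimistic}, with the decomposition originating in \citet{hoda2010smoothing}, so there is no in-paper argument to compare yours against. That said, your plan reproduces the standard argument from those references: expose the per-decision-point additivity of $\phi(\vx)+\innp{\vh-\nabla\phi(\vx_{k-1}),\vx}$, use the degree-one positive homogeneity of each summand $x^{p_j}\phi^j(\vx^j/x^{p_j})$ to show that the value-to-go of the subtreeplex below a sequence is linear in that sequence's value up to a center-dependent constant, identify the slope via a Fenchel conjugate, and collapse the subtree into an additive correction $\hat{\vh}^j$ to the local gradient so that each level reduces to a simplex prox. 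This is the right route, and the base case and the factoring-out of $x^{p_j}$ in the inductive step are correct.

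Two points deserve more than the acknowledgment you give them, since they constitute essentially the whole content of the proof rather than routine bookkeeping. First, the slope of $V_{j'}(\mu)$ is the optimized value $\min_{\vz}\{\phi^{\downarrow j'}(\vz)+\innp{\vh^{\downarrow j'}-\nabla\phi(\vx_{k-1})^{\downarrow j'},\vz}\}$, which equals the \emph{negative} of the conjugate $\phi^{\downarrow j'^*}$ at the stated dual point; your sentence identifying ``the slope with the negative optimized value, i.e.\ with $\phi^{\downarrow j'^*}$'' conflates the two, and unlike the constants $c_{j'}$ this sign does affect the argmin because the slope multiplies $b^j_a$. It must be reconciled with the convention that $\hat{\vh}^j$ is \emph{added} to $\vh^j$ inside a minimization. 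Second, the block of $\nabla\phi(\vx_{k-1})$ indexed by the top infoset $j'$ of a subtree is not the gradient of the standalone subtreeplex regularizer at the unnormalized center: it equals $\nabla\phi^{j'}\big(\vx_{k-1}^{j'}/x_{k-1}^{(j,a)}\big)$ plus contributions from the children of $j'$, and the two trailing terms $-\phi^{j'}\big(\vx_{k-1}^{j'}/x_{k-1}^{(j,a)}\big)+\big\langle\nabla\phi^{j'}\big(\vx_{k-1}^{j'}/x_{k-1}^{(j,a)}\big),\vx_{k-1}^{j'}/x_{k-1}^{(j,a)}\big\rangle$ in the statement exist precisely to absorb this discrepancy. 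Until those two reconciliations are carried out explicitly, the proposal is a correct roadmap to the known proof rather than a complete argument.
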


\section{Extrapolated Cyclic Algorithm}

Our extrapolated cyclic primal-dual algorithm is summarized in \Cref{algo:main}. As discussed in \Cref{sec:prelims}, under the block partition and ordering that respects the treeplex ordering, the updates for $\vx_k^{(t)}$ in Line 9 (respectively, $\vy_k^{(t)}$ in Line 12), up to scaling by the value of their respective parent sequences, can be carried out using only the information about $\frac{\vx_k^j}{x_k^{p_j}}$ and $\vh_k^j$ (respectively, $\frac{\vy_k^j}{y_k^{p_j}}$ and $\vg_k^j$) for infosets $j$ that are ``lower'' on the treeplex. The specific choices of the extrapolation sequences $\vxt_k$ and $\vyt_k$ that only utilize the information from prior cycles and the scaled values of $\frac{\vx_k^j}{x_k^{p_j}}$ and  $\frac{\vy_k^j}{y_k^{p_j}}$ for infosets $j$ updated up to the block $t$ updates for $\vx_k$ and $\vy_k$ are what crucially enables us to decompose the updates for $\vx_k$ and $\vy_k$ into local block updates carried out in the bottom-up manner. At the end of the cycle, once  $\frac{\vx_k^j}{x_k^{p_j}}$ and  $\frac{\vy_k^j}{y_k^{p_j}}$ has been updated for all infosets, we can carry out a top-to-bottom update to fully determine vectors $\vx_k$ and $\vy_k$, as summarized in the last two for loops in \Cref{algo:main}.  We present an implementation-specific version of the algorithm in \Cref{sec:implementation_specific}, which explicitly demonstrates that our algorithm's runtime does not have a dependence on the number of blocks used.

 Our convergence  argument is built on the decomposition of the relaxed gap $\Gap^{\vu, \vv}(\vx_k, \vv_k)$ for arbitrary but fixed $(\vu, \vv) \in \cX \times \cY$ into telescoping and non-positive terms, which is common in first-order methods. The first idea that enables leveraging cyclic updates lies in replacing vectors $\mM \vx_k$ and $\mM^{\top}\vy_k$ by ``extrapolated'' vectors $\vg_k$ and $\vh_k$ that can be partially updated in a blockwise fashion as a cycle of the algorithm progresses, as stated in \Cref{prop:construction}. To our knowledge, this basic idea originates in \citet{song2021fast}. Unique to our work are the specific choices of $\vg_k$ and $\vh_k$, which leverage all the partial information known to the algorithm up to the current iteration and block update. Crucially, we leverage the treeplex structure to show that our chosen updates are sufficient to bound the error sequence $\cE_k$ and obtain the claimed convergence bound in \Cref{thm:main}. 

To simplify the exposition, we introduce the following notation:
\begin{equation}\label{eq:additional-matrix-notation}
    \begin{gathered}
    \mM_{\vx} := \sum_{t = 1}^{s-1} \mM_{t, t+1:s},\;\; \mM_{\vy} := \mM - \mM_\vx =  \sum_{t=1}^s\mM_{t:s, t};\\
    \mu_x := \|\mM_{\vx}\|_* + \|\mM_{\vy}\|_*, \;\;\mu_y := \|\mM_{\vx}^{\top}\|_* + \|\mM_{\vy}^{\top}\|_*.
    \end{gathered}
\end{equation}
When the norm of the space is $\|\cdot\| = \|\cdot\|_1$, both $\mu_x$ and $\mu_y$ are bounded above by $2\max_{i,j}|M_{ij}|.$

\begin{algorithm}
\caption{Extrapolated Cyclic Primal-Dual EFG Solver (ECyclicPDA)}\label{algo:main}
\begin{algorithmic}[1]
\State \textbf{Initialization:} $\vx_0 \in \cX,$ $\vy_0 \in \cY, \eta_0 =H_0 = 0, \eta = \frac{\sqrt{c_\phi c_\psi}}{\mu_x + \mu_y}, \vxb_0 = \vx_0, \vyb_0 = \vy_0$, $\vg_0 = \zeros,$ $\vh_0 = \zeros$
\For{$k = 1: K$}
\State Choose $\eta_k \leq \eta$, $H_k = H_{k-1} + \eta_k$
\State $\vg_{k} = \vg_{k-1},$ $\vh_k = \vh_{k-1}$
\State $\vxt_k = \vx_{k-1} + \frac{\eta_{k-1}}{\eta_k}(\vx_{k-1} - \vx_{k-2})$, $\vyt_{k} = \vy_{k-1} + \frac{\eta_{k-1}}{\eta_k}(\vy_{k-1} - \vy_{k-2})$
\For{$t = 1:s$}
\State $\vh_{k}^{(t)} = (\mM^{(:, t)})^{\top} \vyt_k$
\State $\vx_k^{(t)} = \Big[\argmin_{\vx \in \cX}\Big\{ \eta_k\innp{\vx, \vh_k} + D_{\phi}(\vx, \vx_{k-1})\Big\}\Big]^{(t)}$
\State $\vxt_k^{(t)} = \Big[\frac{\vx_k^j}{x_k^{p_j}}x_{k-1}^{p_j} + \frac{\eta_{k-1}}{\eta_k}\Big(\vx_{k-1}^j - \frac{\vx_{k-1}^j}{x_{k-1}^{p_j}}x_{k-2}^{p_j}\Big)\Big]_{j \in \cJX^{(t)}}$
\State $\vg_k^{(t)} = \mM^{(t, :)}\vxt_k$ 
\State $\vy_k^{(t)} = \Big[\argmax_{\vv \in \cY} \Big\{\eta_k  \innp{\vg_k, \vv} - D_{\psi}(\vv, \vy_{k-1}) \Big\}\Big]^{(t)}$
\State $\vyt_k^{(t)} = \Big[\frac{\vy_k^j}{y_k^{p_j}}y_{k-1}^{p_j} + \frac{\eta_{k-1}}{\eta_k}\Big(\vy_{k-1}^j - \frac{\vy_{k-1}^j}{y_{k-1}^{p_j}}y_{k-2}^{p_j}\Big)\Big]_{j \in \cJY^{(t)}}$
\EndFor
\For{$j \in \cJX$}
\State ${\vx}^{j}_k = {x}^{p_j}_k \cdot \big(\frac{{\vx}^{j}_k}{{x}^{p_j}_k}\big)$
\EndFor
\For{$j \in \cJY$}
\State ${\vy}^{j}_k = {y}^{p_j}_k \cdot \big(\frac{{\vy}^{j}_k}{{y}^{p_j}_k}\big)$
\EndFor
\State $\vxb_k = \frac{H_k - \eta_k}{H_k}\vxb_{k-1} + \frac{\eta_k}{H_k}{\vx}_k, \; \vyb_k = \frac{H_k - \eta_k}{H_k}\vyb_{k-1} + \frac{\eta_k}{H_k}{\vy}_k$
\EndFor
\State \textbf{Return:} $\vxb_K, \vyb_K$
\end{algorithmic}
\end{algorithm}

The next proposition decomposes the relaxed gap into an error term and telescoping terms. The proposition is independent of the specific choices of extrapolated vectors $\vg_k, \vh_k.$
 
\begin{proposition}\label{prop:construction}
Let $\vx_k, \vy_k$ be the iterates of \Cref{algo:main} for $k \geq 1$. Then, for all $k \geq 1,$ $\vx_k \in \cX,$ $\vy_k \in \cY,$ we have
\begin{equation}\notag
    \begin{aligned}
          \eta_k \Gap^{\vu, \vv}(\vx_k, \vy_k) \leq \;& \cE_k 
    -D_{\phi}(\vu, \vx_{k}) + D_{\phi}(\vu, \vx_{k-1}) - D_{\psi}(\vv, \vy_{k}) + D_{\psi}(\vv, \vy_{k-1}),
    \end{aligned}
\end{equation}
where the error sequence $\cE_k$ is defined by
\begin{align*}
    \cE_k := \;&\eta_k  \innp{\mM\vx_k - \vg_k, \vv - \vy_k} -  \eta_k\innp{\vu - \vx_k, \mM^{\top}\vy_k - \vh_{k}} - D_{\psi}(\vy_k, \vy_{k-1}) - D_{\phi}(\vx_k, \vx_{k-1}).
\end{align*}
\end{proposition}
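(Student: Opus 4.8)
The plan is to reduce the statement to the standard primal--dual gap decomposition for mirror-descent/prox-type updates, using the first-order optimality conditions of the two prox problems together with the three-point identity for Bregman divergences. The only EFG-specific ingredient is the reduction that lets me treat $\vx_k$ and $\vy_k$ as the full-vector solutions of the argmin/argmax in Lines 8 and 11 with the \emph{final} vectors $\vh_k$ and $\vg_k$, even though they are produced by the pseudo-block loop; everything after that is routine. Since the proposition is deliberately stated independently of the choices of $\vg_k,\vh_k$, I will keep $\vg_k,\vh_k$ arbitrary throughout.

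First I would record feasibility and the prox characterization. By \Cref{prop:prox_update}, the dilated prox update decomposes into local simplex updates, where the behavioral strategy $\vx_k^j/x_k^{p_j}$ at an infoset $j\in\cJX^{(t)}$ depends only on the local gradient $\vh_k^j$ (set in Line 7) and on quantities associated with the children of $j$; by the treeplex-respecting block ordering these children lie in blocks with index $<t$ and are therefore already computed. Hence each behavioral strategy is a genuine minimizer over the local simplex $\Delta^{n_j}$, the top-down reconstruction in Lines 14--16 (resp.\ 17--19) returns a valid sequence-form point so that $\vx_k\in\cX$ and $\vy_k\in\cY$, and the reconstructed $\vx_k$ (resp.\ $\vy_k$) coincides with $\argmin_{\vx\in\cX}\{\eta_k\innp{\vx,\vh_k}+D_{\phi}(\vx,\vx_{k-1})\}$ (resp.\ the argmax of Line 11) for the final $\vh_k$ (resp.\ $\vg_k$). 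This is the step I expect to be the main obstacle, as it is the only place where the non-separable treeplex structure and the bottom-up ordering of the blocks genuinely enter.

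Next I would write the first-order optimality conditions. Optimality of $\vx_k$ over the convex set $\cX$ gives $\innp{\eta_k\vh_k+\nabla\phi(\vx_k)-\nabla\phi(\vx_{k-1}),\vu-\vx_k}\ge 0$ for all $\vu\in\cX$; applying the three-point identity $\innp{\nabla\phi(\vx_k)-\nabla\phi(\vx_{k-1}),\vu-\vx_k}=D_{\phi}(\vu,\vx_{k-1})-D_{\phi}(\vu,\vx_k)-D_{\phi}(\vx_k,\vx_{k-1})$ turns this into $\eta_k\innp{\vh_k,\vx_k-\vu}\le D_{\phi}(\vu,\vx_{k-1})-D_{\phi}(\vu,\vx_k)-D_{\phi}(\vx_k,\vx_{k-1})$. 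The analogous argument for the concave $\vy$-update yields $\eta_k\innp{\vg_k,\vv-\vy_k}\le D_{\psi}(\vv,\vy_{k-1})-D_{\psi}(\vv,\vy_k)-D_{\psi}(\vy_k,\vy_{k-1})$.

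Finally I would assemble the gap. Adding and subtracting $\vy_k$ and $\vx_k$ inside the bilinear terms and using $\innp{\mM\vx_k,\vy_k}=\innp{\vx_k,\mM^{\top}\vy_k}$ produces the cancellation $\eta_k\Gap^{\vu,\vv}(\vx_k,\vy_k)=\eta_k\innp{\mM\vx_k,\vv-\vy_k}-\eta_k\innp{\vu-\vx_k,\mM^{\top}\vy_k}$. I then split $\mM\vx_k=(\mM\vx_k-\vg_k)+\vg_k$ and $\mM^{\top}\vy_k=(\mM^{\top}\vy_k-\vh_k)+\vh_k$, so that the terms $\eta_k\innp{\vg_k,\vv-\vy_k}$ and $\eta_k\innp{\vh_k,\vx_k-\vu}$ appear explicitly, and substitute the two optimality inequalities from the previous step. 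Collecting the two residual inner products together with $-D_{\psi}(\vy_k,\vy_{k-1})$ and $-D_{\phi}(\vx_k,\vx_{k-1})$ into $\cE_k$, and leaving the telescoping terms $\pm D_{\phi}(\vu,\cdot)$ and $\pm D_{\psi}(\vv,\cdot)$ outside, reproduces exactly the claimed bound.
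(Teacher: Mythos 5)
Your proposal is correct and follows essentially the same route as the paper's proof: the paper also keeps $\vg_k,\vh_k$ arbitrary throughout, and its observation that the objective under the max is $-\psi$ plus linear terms (so $\Psi_k(\vv)\le\Psi_k(\vy_k)-D_{\psi}(\vv,\vy_k)$) is just a packaged form of your first-order-optimality-plus-three-point-identity step, after which the same splitting of $\mM\vx_k$ and $\mM^{\top}\vy_k$ around $\vg_k,\vh_k$ yields the claimed bound. The only difference is that you justify the reduction of the block-wise updates to a single full prox step (feasibility and coincidence with the argmin/argmax for the final $\vh_k,\vg_k$) in more detail, whereas the paper treats this as immediate from the algorithm description.
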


\begin{proof}
The claim that $\vx_k \in \cX,$ $\vy_k \in \cY$ is immediate from the algorithm description, as both are solutions to constrained optimization problems with these same constraints.

For the remaining claim, observe first that
\begin{align*}
    \eta_k\innp{\mM \vx_k, \vv}  
    &= \eta_k \innp{\vg_k, \vv} - D_{\psi}(\vv, \vy_{k-1}) + D_{\psi}(\vv, \vy_{k-1}) + \eta_k  \innp{\mM\vx_k - \vg_{k}, \vv}. 
\end{align*}

Recall from Algorithm~\ref{algo:main} that 
\begin{equation}\notag
    \vy_k = \argmax_{\vv \in \cY} \Big\{\eta_k  \innp{\vg_k, \vv} - D_{\psi}(\vv, \vy_{k-1}) \Big\}.
\end{equation}
Define the function under the max defining $\vy_k$ by $\Psi_k.$ Then as $\Psi_k(\cdot)$ is the sum of $-\psi(\cdot)$ and linear terms, we have $D_{\Psi_k}(\cdot, \vy) = -D_{\psi}(\cdot, \vy),$ for any $\vy.$ Further, as $\Psi_k$ is maximized by $\vy_k,$ we have $\Psi_k(\vv) \leq \Psi_k(\vy_k) - D_{\psi}(\vv, \vy_k).$ Thus, it follows that
\begin{equation}\label{eq:ub}
    \begin{aligned}
        \eta_k\innp{\mM \vx_k, \vv} \leq\; & \eta_k  \innp{\vg_k, \vy_k} - D_{\psi}(\vy_k, \vy_{k-1}) +  \innp{\mM\vx_k - \vg_k, \vv}\\
        &- D_{\psi}(\vv, \vy_{k}) + D_{\psi}(\vv, \vy_{k-1}). 
    \end{aligned}
\end{equation}

Using the same ideas for the primal side, we have
\begin{equation}\label{eq:lb}
    \begin{aligned}
        \eta_k\innp{\mM \vu, \vy_k} \geq \; & \eta_k  \innp{\vx_k, \vh_k} + D_{\phi}(\vx_k, \vx_{k-1}) + \eta_k  \innp{\vu, \mM^{\top}\vy_k - \vh_k}\\
        &+ D_{\phi}(\vu, \vx_{k}) - D_{\phi}(\vu, \vx_{k-1}) 
    \end{aligned}
\end{equation}

Combining \eqref{eq:ub} and \eqref{eq:lb},
\begin{align*}
    \eta_k \Gap^{\vu, \vv}(\vx_k, \vy_k) \leq \;& \eta_k  \innp{\mM\vx_k - \vg_k, \vv - \vy_k} - \eta_k \innp{\vu - \vx_k, \mM^{\top}\vy_k - \vh_k}\\
    &- D_{\psi}(\vy_k, \vy_{k-1}) - D_{\phi}(\vx_k, \vx_{k-1}) \\
    &-D_{\phi}(\vu, \vx_{k}) + D_{\phi}(\vu, \vx_{k-1}) - D_{\psi}(\vv, \vy_{k}) + D_{\psi}(\vv, \vy_{k-1}).
\end{align*}
To complete the proof, it remains to combine the definition of $\cE_k$ from the proposition statement with the last inequality.
\end{proof}

To obtain our main result, we leverage the blockwise structure of the problem, the bilinear structure of the objective, and the treeplex structure of the feasible sets to control the error sequence $\cE_k.$ A key property that enables this result is that normalized entries $\vx_k^j/x_{k-1}^{p_j}$ from the same information set belong to a probability simplex. This property is crucially used in controlling the error of the extrapolation vectors. The main result is summarized in the following theorem. 

\begin{theorem}\label{thm:main}
Consider the iterates $\vx_k, \vy_k$ for $k \geq 1$ in \Cref{algo:main} and the output primal-dual pair $\vxb_K, \vyb_K.$ 
Then, $\forall k \geq1,$
\begin{align*}
&\frac{\mu_x D_{\phi}(\vx^*, \vx_K) + \mu_y D_\psi(\vy^*, \vy_K)}{\mu_x + \mu_y}  
   \leq D_{\phi}(\vx^*, \vx_0) + D_{\psi}(\vy^*, \vy_0), \;\text{ and, further, }\\
    &\Gap(\vxb_K, \vyb_K) = \sup_{\vu \in \cX, \vv \in \cY} \{\innp{\mM \vxb_K, \vv} - \innp{\mM \vu, \vyb_K}\} \leq \frac{\sup_{\vu \in \cX, \vv \in \cY}\{D_{\phi}(\vu, \vx_0) + D_{\psi}(\vv, \vy_0)\}}{H_K}.
\end{align*}
In the above bound, if $\forall k \geq 1, $ $\eta_k = \eta = \frac{\sqrt{c_\phi c_\psi}}{\mu_x + \mu_y}$, then $H_K = K\eta$. As a consequence, for any $\epsilon > 0,$ $\Gap(\vxb_K, \vyb_K) \leq \epsilon$ after at most $\big\lceil \frac{(\mu_x + \mu_y) (\sup_{\vu \in \cX, \vv \in \cY}\{D_{\phi}(\vu, \vx_0) + D_{\psi}(\vv, \vy_0)\})}{\sqrt{c_\phi c_\psi}\epsilon}\big \rceil$ iterations. 
\end{theorem}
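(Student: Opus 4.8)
The plan is to convert the single-cycle decomposition of \Cref{prop:construction} into a telescoping bound and reduce everything to controlling the accumulated error. Fixing an arbitrary $(\vu,\vv)\in\cX\times\cY$ and summing \Cref{prop:construction} over $k=1,\dots,K$, the unweighted divergences $\pm D_\phi(\vu,\vx_k)$ and $\pm D_\psi(\vv,\vy_k)$ telescope, giving
\[
\sum_{k=1}^K \eta_k \Gap^{\vu,\vv}(\vx_k,\vy_k) \le \sum_{k=1}^K \cE_k + D_\phi(\vu,\vx_0) + D_\psi(\vv,\vy_0) - D_\phi(\vu,\vx_K) - D_\psi(\vv,\vy_K).
\]
Since $\Gap^{\vu,\vv}$ is affine in $(\vx,\vy)$ and, by the averaging step of \Cref{algo:main}, $(\vxb_K,\vyb_K)$ is the $\{\eta_k/H_K\}$-weighted average of the iterates, Jensen's inequality replaces the left-hand side by $H_K\,\Gap^{\vu,\vv}(\vxb_K,\vyb_K)$. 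The whole argument then rests on showing that the error is dominated by a \emph{weighted} telescoping potential.

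The heart of the proof, and the step I expect to be the main obstacle, is the per-cycle bound on $\cE_k$: this is exactly where the non-separability of the treeplex is confronted and where one must avoid any polynomial blow-up in the number of blocks $s$. I would split the two extrapolation inner products in $\cE_k$ into contributions that are linear in the comparators $\vu,\vv$ and contributions involving only the iterates. Using the block partition \eqref{eq:additional-matrix-notation} together with the definitions of the surrogate gradients $\vg_k,\vh_k$ and the extrapolation sequences $\vxt_k,\vyt_k$ in \Cref{algo:main}, the residuals $\mM\vx_k-\vg_k$ and $\mM^\top\vy_k-\vh_k$ can be written as $\mM_{\vx},\mM_{\vy}$ (and their transposes) applied to consecutive differences of the parent-rescaled iterates, and the comparator-linear pieces telescope across cycles in the spirit of \citet{song2021fast}. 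The crucial point, flagged in the text just before the theorem, is that these rescalings involve only normalized blocks $\vx_k^j/x_k^{p_j}$ lying in a probability simplex, whose norms are bounded by $1$; this is what lets Hölder's inequality pass to the matrix norms and produce exactly the constants $\mu_x=\|\mM_{\vx}\|_*+\|\mM_{\vy}\|_*$ and $\mu_y=\|\mM_{\vx}^\top\|_*+\|\mM_{\vy}^\top\|_*$ \emph{with no factor growing in $s$}. Applying Young's inequality to each resulting product of norms, absorbing the squared-norm terms into the $-D_\phi(\vx_k,\vx_{k-1})$ and $-D_\psi(\vy_k,\vy_{k-1})$ present in the definition of $\cE_k$ via $c_\phi$- and $c_\psi$-strong convexity, with the step size $\eta=\sqrt{c_\phi c_\psi}/(\mu_x+\mu_y)$ chosen precisely to cancel these quadratics, the target is the one-cycle inequality
\[
\cE_k \le \tfrac{\mu_y}{\mu_x+\mu_y}\big(D_\phi(\vu,\vx_k)-D_\phi(\vu,\vx_{k-1})\big) + \tfrac{\mu_x}{\mu_x+\mu_y}\big(D_\psi(\vv,\vy_k)-D_\psi(\vv,\vy_{k-1})\big).
\]
Verifying that this telescoping closes — in particular, correctly accounting for the cross-cycle parent rescalings so that no per-block error accumulates — is the delicate part.

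Summing this per-cycle bound, substituting into the telescoped inequality of the first paragraph, and collecting coefficients yields
\[
\sum_{k=1}^K \eta_k \Gap^{\vu,\vv}(\vx_k,\vy_k) \le \frac{\mu_x D_\phi(\vu,\vx_0)+\mu_y D_\psi(\vv,\vy_0)}{\mu_x+\mu_y} - \frac{\mu_x D_\phi(\vu,\vx_K)+\mu_y D_\psi(\vv,\vy_K)}{\mu_x+\mu_y}.
\]
For the first claim I would take $(\vu,\vv)=(\vx^*,\vy^*)$ a saddle point, so that $\Gap^{\vx^*,\vy^*}(\vx_k,\vy_k)\ge0$; discarding the nonnegative left-hand side and rearranging gives $\frac{\mu_x D_\phi(\vx^*,\vx_K)+\mu_y D_\psi(\vy^*,\vy_K)}{\mu_x+\mu_y} \le \frac{\mu_x D_\phi(\vx^*,\vx_0)+\mu_y D_\psi(\vy^*,\vy_0)}{\mu_x+\mu_y} \le D_\phi(\vx^*,\vx_0)+D_\psi(\vy^*,\vy_0)$, the last step using that a convex combination of nonnegative quantities is at most their sum. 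For the second claim I would instead drop the nonpositive final term, bound the leading term by $D_\phi(\vu,\vx_0)+D_\psi(\vv,\vy_0)$, apply the Jensen step, divide by $H_K$, and take the supremum over $(\vu,\vv)\in\cX\times\cY$, which turns the left-hand side into $\Gap(\vxb_K,\vyb_K)$ and yields the stated bound.

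Finally, specializing to the constant step size $\eta_k=\eta=\sqrt{c_\phi c_\psi}/(\mu_x+\mu_y)$, the recursion $H_k=H_{k-1}+\eta_k$ gives $H_K=K\eta$; setting the resulting gap bound to at most $\epsilon$ and solving for $K$ produces the claimed iteration count $\lceil(\mu_x+\mu_y)\sup_{\vu\in\cX,\vv\in\cY}\{D_\phi(\vu,\vx_0)+D_\psi(\vv,\vy_0)\}/(\sqrt{c_\phi c_\psi}\,\epsilon)\rceil$.
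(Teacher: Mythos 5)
Your overall architecture is the paper's: decompose via \Cref{prop:construction}, control $\cE_k$ using the block structure, the simplex property of the normalized entries $\vx_k^j/x_k^{p_j}$, H\"older and Young, telescope, apply Jensen/linearity of $\Gap^{\vu,\vv}$, and specialize the comparator. However, the step you yourself flag as the heart of the argument is stated in a form that cannot hold: the self-contained one-cycle inequality
$\cE_k \le \tfrac{\mu_y}{\mu_x+\mu_y}\big(D_\phi(\vu,\vx_k)-D_\phi(\vu,\vx_{k-1})\big) + \tfrac{\mu_x}{\mu_x+\mu_y}\big(D_\psi(\vv,\vy_k)-D_\psi(\vv,\vy_{k-1})\big)$
is not what the extrapolation analysis produces, and the natural route to it fails. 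The residual $\mM\vx_k-\vg_k$ decomposes (via the choice of $\vxt_k$) into consecutive differences such as $\vx_k-\vx_{k-1}-\tfrac{\eta_{k-1}}{\eta_k}(\vx_{k-1}-\vx_{k-2})$ and $\vx_k-\vxh_k-\tfrac{\eta_{k-1}}{\eta_k}(\vx_{k-1}-\vxh_{k-1})$, so $\cE_k$ contains terms like $\eta_k\innp{\mM_{\vx}(\vx_k-\vx_{k-1}),\vv-\vy_k}$ that are \emph{not} controlled by anything available within cycle $k$; they are cancelled only by the matching term generated at cycle $k+1$. If you instead try to absorb such a term locally by Young's inequality, you must pay a fixed positive multiple of $D_\psi(\vv,\vy_k)$ (or of $\|\vv-\vy_k\|^2$) at \emph{every} cycle, and these charges do not telescope: the only negative comparator divergences available come from \Cref{prop:construction} and telescope to a single $-D_\psi(\vv,\vy_K)$. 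So the per-cycle target, if pursued literally, breaks the proof.

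The correct structure (this is what the paper's Lemma~\ref{lemma:inn-prod-err} establishes) is to bound $\cE_k$ by cross-cycle telescoping ``momentum'' terms --- inner products of $\mM_{\vx}(\vx_k-\vx_{k-1})$, $\mM_{\vy}(\vx_k-\vxh_k)$ and their duals against $\vv-\vy_k$, $\vu-\vx_k$, plus telescoping squared consecutive-iterate differences weighted by $c_\phi\mu_x/(\mu_x+\mu_y)$ and $c_\psi\mu_y/(\mu_x+\mu_y)$ --- so that $\sum_{k=1}^K\cE_k$ reduces to boundary terms at $k=K$ only. The Young absorption against the comparator divergences is then performed \emph{once}, at $k=K$, using the $-D_\phi(\vu,\vx_K)$ and $-D_\psi(\vv,\vy_K)$ left over from the telescoped \Cref{prop:construction}; since the surviving inner products in $\vu-\vx_K$ carry matrices of total norm $\mu_y$ and those in $\vv-\vy_K$ carry total norm $\mu_x$, exactly the fractions $\mu_y/(\mu_x+\mu_y)$ of $D_\phi(\vu,\vx_K)$ and $\mu_x/(\mu_x+\mu_y)$ of $D_\psi(\vv,\vy_K)$ are consumed, which is where the coefficients in the first claim of the theorem come from. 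Your summed display and the remaining steps (choice of $(\vx^*,\vy^*)$, nonnegativity of the gap and of the divergences, $H_K=K\eta$, and solving for $K$) are correct and match the paper once the per-cycle bound is replaced by this telescoping-with-boundary-terms argument.
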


For notational convenience, in this proof we define vectors $\vxh_k$ and $\vyh_k$ by $\vxh_k^j = \frac{\vx_k^j}{x_k^{p_j}}x_{k-1}^{p_j}$ for $j \in \cJX$ and $\vyh_k^j = \frac{\vy_k^j}{y_k^{p_j}}y_{k-1}^{p_j}$ for $j \in \cJY,$ so that $\vxt_k = \vx_k - \vxh_k - \frac{\eta_{k-1}}{\eta_k}(\vx_{k-1} - \vxh_{k-1})$ and $\vyt_k = \vy_k - \vyh_k - \frac{\eta_{k-1}}{\eta_k}(\vy_{k-1} - \vyh_{k-1}).$ 

To prove the theorem, we first prove the following auxiliary lemma which bounds the inner product  terms appearing in the error terms $\cE_k.$

\begin{lemma}\label{lemma:inn-prod-err}
    In all iterations $k$ of Algorithm~\ref{algo:main}, for any $(\vu, \vv) \in \cX \times \cY$ and any $\alpha, \beta >0,$
    
    \begin{equation}\notag
\begin{aligned}
    \eta_k \innp{\mM \vx_k - \vg_k, \vv - \vy_k}
    \leq \; & \eta_k \innp{\mM_{\vy}(\vx_k - \vxh_k), \vv - \vy_k} - \eta_{k-1} \innp{\mM_{\vy}(\vx_{k-1} - \vxh_{k-1}), \vv - \vy_{k-1}}\\
    &+\eta_k \innp{\mM_{\vx}(\vx_k - \vx_{k-1}), \vv - \vy_k} - \eta_{k-1}\innp{\mM_{\vx}(\vx_{k-1} - \vx_{k-2}), \vv - \vy_{k-1}}\\
    &+ \eta_{k-1}\frac{\|\mM_{\vx}\|_* + \|\mM_{\vy}\|_*}{2}\Big(\alpha \|\vx_{k-1} - \vx_{k-2}\|^2 + \frac{1}{\alpha}\|\vy_k - \vy_{k-1}\|^2\Big).
\end{aligned}
\end{equation}
and
\begin{equation}\notag
    \begin{aligned}
         - \eta_k \innp{\vu - \vx_k, \mM^{\top}\vy_k - \vh_k}
        \leq  \;& - \eta_k \innp{\mM^{\top}_{\vx}(\vy_k - \vyh_{k}), \vu - \vx_k} + \eta_{k-1} \innp{\mM^{\top}_{\vx}(\vy_{k-1} - \vyh_{k-1}), \vu - \vx_{k-1}}\\
         &- \eta_k \innp{\mM^{\top}_{\vy}(\vy_k - \vy_{k-1}), \vu - \vx_k} + \eta_{k-1} \innp{\mM^{\top}_{\vy}(\vy_{k-1} - \vy_{k-2}), \vu - \vx_{k-1}}\\
         &+ \eta_{k-1}\frac{\|\mM^{\top}_{\vx} + \mM^{\top}_{\vy}\|_*}{2}\Big(\beta\|\vx_{k-1} - \vx_{k}\|^2 + \frac{1}{\beta}\|\vy_{k-1} - \vy_{k-2}\|^2\Big).
    \end{aligned}
\end{equation}
\end{lemma}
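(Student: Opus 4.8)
The plan is to establish the first inequality in full and obtain the second by the symmetric argument on the dual side, where the roles of $\mM_{\vx}$ and $\mM_{\vy}$ swap. This swap is not cosmetic: within one cyclic step the $\vx$-block is updated (Lines 8--9) before the $\vy$-block (Lines 11--12), so when $\vg_k^{(t)} = \mM^{(t,:)}\vxt_k$ is formed in Line 10 it already sees the freshly updated $\vx$-blocks $1,\dots,t$, whereas $\vh_k^{(t)}$ in Line 7 only sees the freshly updated $\vy$-blocks $1,\dots,t-1$. I would begin from $\mM\vx_k - \vg_k$ and split each row block $t$ of $\vg_k$ according to the version of $\vxt_k$ that was available when that block was computed: the column blocks $t'\le t$ are already processed and hence appear in ``fresh'' form through $\vxh_k$, contributing $\vx_k - \vxh_k - \tfrac{\eta_{k-1}}{\eta_k}(\vx_{k-1}-\vxh_{k-1})$, while the column blocks $t'>t$ are still in their plain extrapolated form and contribute $(\vx_k - \vx_{k-1}) - \tfrac{\eta_{k-1}}{\eta_k}(\vx_{k-1}-\vx_{k-2})$. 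Because the block partition respects the treeplex ordering, the ``columns $\le t$'' piece is exactly the lower-triangular part $\mM_{\vy}$ and the ``columns $>t$'' piece is exactly the strictly-upper-triangular part $\mM_{\vx}$.

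Summing the per-block identities over all row blocks $t$ reassembles the matrix--vector products, giving the four terms $\eta_k\innp{\mM_{\vy}(\vx_k-\vxh_k),\vv-\vy_k}$, $-\eta_{k-1}\innp{\mM_{\vy}(\vx_{k-1}-\vxh_{k-1}),\vv-\vy_k}$, $\eta_k\innp{\mM_{\vx}(\vx_k-\vx_{k-1}),\vv-\vy_k}$, and $-\eta_{k-1}\innp{\mM_{\vx}(\vx_{k-1}-\vx_{k-2}),\vv-\vy_k}$ (the $\eta_{k-1}$ factors coming from $\eta_k\cdot\tfrac{\eta_{k-1}}{\eta_k}$). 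To match the telescoping form of the lemma's right-hand side, I would rewrite $\vv-\vy_k = (\vv-\vy_{k-1}) + (\vy_{k-1}-\vy_k)$ in the two $\eta_{k-1}$-weighted terms. The $(\vv-\vy_{k-1})$ pieces produce exactly the shifted terms appearing in the statement, and the remainder collects into the single cross term $\eta_{k-1}\innp{\mM_{\vy}(\vx_{k-1}-\vxh_{k-1}) + \mM_{\vx}(\vx_{k-1}-\vx_{k-2}),\,\vy_k-\vy_{k-1}}$.

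It then remains to bound this cross term by the Young-type quantity in the statement. By Hölder's inequality, the triangle inequality, and the definition of the operator norm it is at most $\eta_{k-1}\big(\|\mM_{\vy}\|_*\,\|\vx_{k-1}-\vxh_{k-1}\| + \|\mM_{\vx}\|_*\,\|\vx_{k-1}-\vx_{k-2}\|\big)\,\|\vy_k-\vy_{k-1}\|$. The decisive step is the key property $\|\vx_{k-1}-\vxh_{k-1}\| \le \|\vx_{k-1}-\vx_{k-2}\|$, which lets me replace the first factor and pull out $\|\mM_{\vx}\|_*+\|\mM_{\vy}\|_*$; applying Young's inequality $ab\le \tfrac{\alpha}{2}a^2 + \tfrac{1}{2\alpha}b^2$ to $\|\vx_{k-1}-\vx_{k-2}\|\cdot\|\vy_k-\vy_{k-1}\|$ then yields precisely the claimed bound. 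To justify the key property I would argue infoset-by-infoset: since $\vxh_{k-1}^j = \frac{\vx_{k-1}^j}{x_{k-1}^{p_j}}x_{k-2}^{p_j}$, we have $\vx_{k-1}^j - \vxh_{k-1}^j = \frac{\vx_{k-1}^j}{x_{k-1}^{p_j}}\,(x_{k-1}^{p_j}-x_{k-2}^{p_j})$, and because $\vx_{k-1}^j/x_{k-1}^{p_j}\in\Delta^{n_j}$ this has (in the $\ell_1$ setup) norm equal to $|x_{k-1}^{p_j}-x_{k-2}^{p_j}|$, which in turn is dominated by $\|\vx_{k-1}^j-\vx_{k-2}^j\|$ because the entries of $\vx^j$ sum to the parent value $x^{p_j}$; summing over $j\in\cJX$ then gives the global bound.

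The main obstacle is the bookkeeping in the first step: one must verify that the version of $\vxt_k$ seen by row block $t$ aligns \emph{exactly} with the $\mM_{\vy}$/$\mM_{\vx}$ split, which is where the treeplex-respecting ordering and the $t'\le t$ versus $t'>t$ boundary are used, and one must track the analogous but shifted $t'<t$ boundary on the dual side to obtain the correct $\mM_{\vx}^{\top}$/$\mM_{\vy}^{\top}$ assignment in the second inequality. The other delicate point is the key norm inequality: it is exactly where the non-separability of the sequence-form polytope is tamed, and the simplex normalization $\vx_{k-1}^j/x_{k-1}^{p_j}\in\Delta^{n_j}$ is what makes the extrapolation error contract rather than amplify. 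I would confirm that the contraction constant is $1$ in the $\ell_1$ geometry of primary interest, and check the boundary conventions ($\eta_0=0$, so $\vxt_1=\vx_0$) so that the $k=1$ case, where $\vx_{-1}$ would otherwise appear, is subsumed.
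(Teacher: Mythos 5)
Your proposal follows essentially the same route as the paper's proof: the same split of each row block of $\mM\vx_k - \vg_k$ according to which column blocks of $\vxt_k$ were already refreshed when $\vg_k^{(t)}$ was formed (and the shifted $t'<t$ boundary for $\vh_k^{(t)}$ on the dual side), the same reassembly into $\mM_{\vy}$ and $\mM_{\vx}$, the same substitution $\vv-\vy_k = (\vv-\vy_{k-1})+(\vy_{k-1}-\vy_k)$ to expose the telescoping terms plus a cross term, and the same simplex-based contraction $\|\vx_{k-1}-\vxh_{k-1}\|\le\|\vx_{k-1}-\vx_{k-2}\|$ followed by H\"older and Young. Your infoset-by-infoset justification of the contraction (bounding $|x^{p_j}_{k-1}-x^{p_j}_{k-2}|$ by $\|\vx^j_{k-1}-\vx^j_{k-2}\|_1$ via the simplex-sum constraint) is, if anything, slightly more careful than the paper's intermediate step, but it is the same argument in substance.
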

\begin{proof}
  Observe first that, by Algorithm~\ref{algo:main}, 
\begin{equation}\label{eq:extrapol-g-1}
\begin{aligned}
\mM^{(t, :)}\vx_k - \vg_k^{(t)} =\; & \mM^{(t, 1:t)}\Big(\vx_k - \vxh_k - \frac{\eta_{k-1}}{\eta_k}(\vx_{k-1} - \vxh_{k-1})\Big)^{(1:t)}\\
&+ \mM^{(t, t+1:s)}\Big(\vx_k - \vx_{k-1} - \frac{\eta_{k-1}}{\eta_k}(\vx_{k-1} - \vx_{k-2})\Big)^{(t+1:s)}.
\end{aligned}
\end{equation}
Additionally, by definition (see Eq.~\eqref{eq:additional-matrix-notation}),  $\sum_{t=1}^s\mM_{t, 1:t} = \mM - \mM_{\vx} = \mM_{\vy}$. Hence, 
  \begin{align}
        &\eta_k\sum_{t=1}^s\innp{\mM^{(t, 1:t)}\Big(\vx_k - \vxh_k - \frac{\eta_{k-1}}{\eta_k}(\vx_{k-1} - \vxh_{k-1})\Big)^{(1:t)}, \vv^{(t)} - \vy_k^{(t)}}\notag\\
        =\; 
        &\eta_k\sum_{t=1}^s\innp{\mM_{t, 1:t}\Big(\vx_k - \vxh_k - \frac{\eta_{k-1}}{\eta_k}(\vx_{k-1} - \vxh_{k-1})\Big), \vv - \vy_k}\notag\\
        =\; 
        &\eta_k\innp{\mM_{\vy}\Big(\vx_k - \vxh_k - \frac{\eta_{k-1}}{\eta_k}(\vx_{k-1} - \vxh_{k-1})\Big), \vv - \vy_k}\notag\\
        =\; 
        & \eta_k \innp{\mM_{\vy}(\vx_k - \vxh_k), \vv - \vy_k} - \eta_{k-1} \innp{\mM_{\vy}(\vx_{k-1} - \vxh_{k-1}), \vv - \vy_{k-1}}\notag\\
        &- \eta_{k-1} \innp{\mM_{\vy}(\vx_{k-1} - \vxh_{k-1}), \vy_{k-1} - \vy_{k}}.\label{eq:rescaling-error}
    \end{align}
The first two terms in \eqref{eq:rescaling-error} telescope, so we focus on bounding $- \eta_{k-1} \innp{\mM_\vy(\vx_{k-1} - \vxh_{k-1}), \vy_{k-1} - \vy_{k}}.$ By definition of $\vxh_k,$ for all $j \in \cJX,$
\begin{align}\notag
    \vx_{k-1}^{j} - \vxh_{k-1}^{j}
    &= 
    \frac{\vx_{k-1}^{j}}{x_{k-1}^{p_j}}\big(x_{k-1}^{p_j} - x_{k-2}^{p_j}\big).
\end{align}
By the definition of a treeplex, each vector $\frac{\vx_{k-1}^{j}}{x_{k-1}^{p_j}}$ belongs to a probability simplex of the appropriate size. This further implies that
\begin{align}
    \|\vx_{k-1} - \vxh_{k-1}\| &= \bigg\|\bigg[ \frac{\vx_{k-1}^{j}}{x_{k-1}^{p_j}}\big(x_{k-1}^{p_j} - x_{k-2}^{p_j}\big)\bigg]_{j\in\cJX}\bigg\|\notag\\
    &\leq \|[x_{k-1}^{p_j} - x_{k-2}^{p_j}]_{j\in\cJX}\|\notag\\
    &\leq \|\vx_{k-1} - \vx_{k-2}\|, \label{eq:vx-vxh-difference}
\end{align}
where the notation $[a^j]_{j \in \cJX}$ is used to denote the vector with entries $a^j,$ for $j \in \cJX.$ The first inequality in \eqref{eq:vx-vxh-difference} holds for any $\ell_p$ norm ($p \geq 1$), by its definition and $\innp{\ones, \vx_{k-1}^j} = 1$, $\forall j.$ 
Thus, applying the definitions of the norms from the preliminaries, 
\begin{align}
    - \innp{\mM_{\vy}(\vx_{k-1} - \vxh_{k-1}), \vy_{k-1} - \vy_{k}} &\leq \| \mM_{\vy}(\vx_{k-1} - \vxh_{k-1})\|_* \|\vy_{k-1} - \vy_{k}\|\notag\\
    &\leq \|\mM_{\vy}\|_* \|\vx_{k-1} - \vx_{k-2}\|\|\vy_{k-1} - \vy_{k}\|\notag\\
    &\leq \frac{\|\mM_{\vy}\|_*}{2}\Big(\alpha{\|\vx_{k-1} - \vx_{k-2}\|^2} + \frac{1}{\alpha}\|\vy_{k} - \vy_{k-1}\|^2 \Big),
\end{align}
where the last line is by Young's inequality and holds for any $\alpha >0$. 

On the other hand, recalling that $\mM_{\vx} = \sum_{t = 1}^{s-1} \mM_{t, t+1:s}$, we also have
\begin{align}
   & \eta_k \sum_{t = 1}^s \innp{\mM^{(t, t+1:s)}\Big(\vx_k - \vx_{k-1} - \frac{\eta_{k-1}}{\eta_k}(\vx_{k-1} - \vx_{k-2})\Big)^{(t+1:s)}, \vv^{(t)} - \vy_k^{(t)}}\notag\\
    =\; & \eta_k \sum_{t = 1}^s \innp{\mM_{t, t+1:s}\Big(\vx_k - \vx_{k-1} - \frac{\eta_{k-1}}{\eta_k}(\vx_{k-1} - \vx_{k-2})\Big), \vv - \vy_k}\notag\\
    =\; &\eta_k \innp{\mM_{\vx}\Big(\vx_k - \vx_{k-1} - \frac{\eta_{k-1}}{\eta_k}(\vx_{k-1} - \vx_{k-2})\Big), \vv - \vy_k} \notag\\
    =\; &\eta_k \innp{\mM_{\vx}(\vx_k - \vx_{k-1}), \vv - \vy_k} - \eta_{k-1}\innp{\mM_{\vx}(\vx_{k-1} - \vx_{k-2}), \vv - \vy_k}\notag\\
    =\; &\eta_k \innp{\mM_{\vx}(\vx_k - \vx_{k-1}), \vv - \vy_k} - \eta_{k-1}\innp{\mM_{\vx}(\vx_{k-1} - \vx_{k-2}), \vv - \vy_{k-1}}\notag\\
    &+ \eta_{k-1}\innp{\mM_{\vx}(\vx_{k-1} - \vx_{k-2}), \vy_k - \vy_{k-1}}.  \label{eq:E_k-1}
\end{align}
Observe that in \eqref{eq:E_k-1} the first two terms telescope and thus we only need to focus on bounding the last term. Applying the definitions of dual and matrix norms from Section~\ref{sec:prelims} and using Young's inequality, we have that for any $\alpha > 0,$
\begin{align}
    \innp{\mM_{\vx}(\vx_{k-1} - \vx_{k-2}), \vy_k - \vy_{k-1}} &\leq \|\mM_{\vx}(\vx_{k-1} - \vx_{k-2})\|_* \|\vy_k - \vy_{k-1}\|\notag\\
    &\leq \|\mM_{\vx}\|_* \|\vx_{k-1} - \vx_{k-2}\|\|\vy_k - \vy_{k-1}\notag\\
    &\leq \frac{\|\mM_{\vx}\|_*}{2}\Big(\alpha \|\vx_{k-1} - \vx_{k-2}\|^2 + \frac{1}{\alpha}\|\vy_k - \vy_{k-1}\|^2\Big). \label{eq:second-inner-prod-x}
\end{align}
Hence, combining \eqref{eq:extrapol-g-1}--\eqref{eq:second-inner-prod-x}, we can conclude that
\begin{equation}\notag
\begin{aligned}
    \eta_k \innp{\mM \vx_k - \vg_k, \vv - \vy_k}
    \leq \; & \eta_k \innp{\mM_{\vy}(\vx_k - \vxh_k), \vv - \vy_k} - \eta_{k-1} \innp{\mM_{\vy}(\vx_{k-1} - \vxh_{k-1}), \vv - \vy_{k-1}}\\
    &+\eta_k \innp{\mM_{\vx}(\vx_k - \vx_{k-1}), \vv - \vy_k} - \eta_{k-1}\innp{\mM_{\vx}(\vx_{k-1} - \vx_{k-2}), \vv - \vy_{k-1}}\\
    &+ \eta_{k-1}\frac{\|\mM_{\vx}\|_* + \|\mM_{\vy}\|_*}{2}\Big(\alpha \|\vx_{k-1} - \vx_{k-2}\|^2 + \frac{1}{\alpha}\|\vy_k - \vy_{k-1}\|^2\Big),
\end{aligned}
\end{equation}
completing the proof of the first claim. 

Similarly, we observe from Algorithm~\ref{algo:main} that 
\begin{equation}\notag
\begin{aligned}
    \big(\mM^{(:, t)}\big)^{\top}\vy_k - \vh_{k}^{(t)} 
    =\; & \big(\mM^{(1:t-1, t)}\big)^{\top}\Big(\vy_{k} - \vyh_k + \frac{\eta_{k-1}}{\eta_k}(\vy_{k-1} - \vyh_{k-1})\Big)^{(1:t-1)}\\ 
    &+\big(\mM^{(t:s, t)}\big)^{\top}\Big(\vy_k - \vy_{k-1}+ \frac{\eta_{k-1}}{\eta_k}(\vy_{k-1} - \vy_{k-2})\Big)^{(t:s)}.
\end{aligned}
\end{equation}
Observing that $\sum_{t=1}^s \mM_{1:t-1, t} = \mM_\vx$ and  $\sum_{t=1}^s\mM_{t:s, t} = \mM_{\vy}$, using the same sequence of arguments as for bounding \eqref{eq:extrapol-g-1}, we can conclude that for any $\beta > 0,$
\begin{equation}\notag
    \begin{aligned}
        & - \eta_k \innp{\vu - \vx_k, \mM^{\top}\vy_k - \vh_k}\\
        \leq  \;& - \eta_k \innp{\mM^{\top}_{\vx}(\vy_k - \vyh_{k}), \vu - \vx_k} + \eta_{k-1} \innp{\mM^{\top}_{\vx}(\vy_{k-1} - \vyh_{k-1}), \vu - \vx_{k-1}}\\
         &- \eta_k \innp{\mM^{\top}_{\vy}(\vy_k - \vy_{k-1}), \vu - \vx_k} + \eta_{k-1} \innp{\mM^{\top}_{\vy}(\vy_{k-1} - \vy_{k-2}), \vu - \vx_{k-1}}\\
         &+ \eta_{k-1}\frac{\|\mM^{\top}_{\vx} + \mM^{\top}_{\vy}\|_*}{2}\Big(\beta\|\vx_{k-1} - \vx_{k}\|^2 + \frac{1}{\beta}\|\vy_{k-1} - \vy_{k-2}\|^2\Big),
    \end{aligned}
\end{equation}  
completing the proof.
\end{proof}

\begin{proof}[Proof Theorem~\ref{thm:main}]
Recalling the definition of $\cE_k,$ by Lemma~\ref{lemma:inn-prod-err}, 
\begin{equation}\label{eq:E_k-big-bnd}
    \begin{aligned}
        \cE_k \leq \; & \eta_k \innp{\mM_{\vy}(\vx_k - \vxh_{k}), \vv - \vy_k} - \eta_{k-1}\innp{\mM_{\vy}(\vx_{k-1} - \vxh_{k-}), \vv - \vy_{k-1}}\\
        &+\eta_k \innp{\mM_{\vx}(\vx_k - \vx_{k-1}), \vv - \vy_k} - \eta_{k-1}\innp{\mM_{\vx}(\vx_{k-1} - \vx_{k-2}), \vv - \vy_{k-1}}\\
    &+ \eta_{k-1}\frac{\|\mM_{\vx}\|_* + \|\mM_{\vy}\|_*}{2}\Big(\alpha \|\vx_{k-1} - \vx_{k-2}\|^2 + \frac{1}{\alpha}\|\vy_k - \vy_{k-1}\|^2\Big)\\
    & - \eta_k \innp{\mM^{\top}_{\vx}(\vy_k - \vyh_{k}), \vu - \vx_k} + \eta_{k-1} \innp{\mM^{\top}_{\vx}(\vy_{k-1} - \vyh_{k-1}), \vu - \vx_{k-1}}\\
    & - \eta_k \innp{\mM^{\top}_{\vy}(\vy_k - \vy_{k-1}), \vu - \vx_k} + \eta_{k-1} \innp{\mM^{\top}_{\vy}(\vy_{k-1} - \vy_{k-2}), \vu - \vx_{k-1}}\\
         &+ \eta_{k-1}\frac{\|\mM^{\top}_{\vx}\|_* + \|\mM^{\top}_{\vy}\|_*}{2}\Big(\beta\|\vx_{k-1} - \vx_{k}\|^2 + \frac{1}{\beta}\|\vy_{k-1} - \vy_{k-2}\|^2\Big)\\
        &- D_{\psi}(\vy_k, \vy_{k-1}) - D_{\phi}(\vx_k, \vx_{k-1}). 
    \end{aligned}
\end{equation}
Recalling that $\psi$ is $c_{\psi}$-strongly convex, $\phi$ is $c_{\phi}$-strongly convex, setting $\alpha = \beta = \sqrt{\frac{c_\phi}{c_\psi}},$ and using that $\eta_{k-1} \leq \frac{\sqrt{c_\phi c_\psi}}{\|\mM_{\vx}\|_* + \|\mM_{\vy}\|_* + \|\mM_{\vx}^{\top}\|_* + \|\mM_{\vy}^{\top}\|_*} = \frac{\sqrt{c_\phi c_\psi}}{\mu_x + \mu_y},$ \eqref{eq:E_k-big-bnd} simplifies to
\begin{equation}\label{eq:E_k-less-big-bnd}
    \begin{aligned}
        \cE_k \leq \; &\eta_k \innp{\mM_{\vy}(\vx_k - \vxh_{k}), \vv - \vy_k} - \eta_{k-1}\innp{\mM_{\vy}(\vx_{k-1} - \vxh_{k-}), \vv - \vy_{k-1}}\\
        &+\eta_k \innp{\mM_{\vx}(\vx_k - \vx_{k-1}), \vv - \vy_k} - \eta_{k-1}\innp{\mM_{\vx}(\vx_{k-1} - \vx_{k-2}), \vv - \vy_{k-1}}\\
        & - \eta_k \innp{\mM^{\top}_{\vx}(\vy_k - \vyh_{k}), \vu - \vx_k} + \eta_{k-1} \innp{\mM^{\top}_{\vx}(\vy_{k-1} - \vyh_{k-1}), \vu - \vx_{k-1}}\\
    & - \eta_k \innp{\mM^{\top}_{\vy}(\vy_k - \vy_{k-1}), \vu - \vx_k} + \eta_{k-1} \innp{\mM^{\top}_{\vy}(\vy_{k-1} - \vy_{k-2}), \vu - \vx_{k-1}}\\
         &+ \frac{c_\psi \mu_y}{2(\mu_x + \mu_y)}\big(\|\vy_{k-1} - \vy_{k-2}\|^2 - \|\vy_k - \vy_{k-1}\|^2\big)\\
         &+ \frac{c_\phi \mu_x}{2(\mu_x + \mu_y)}\big(\|\vx_{k-1} - \vx_{k-2}\|^2 - \|\vx_k - \vx_{k-1}\|^2\big). 
    \end{aligned}
\end{equation}
Telescoping \eqref{eq:E_k-less-big-bnd} and recalling that $\eta_0 = 0$, we now have
\begin{equation}\label{eq:summed-E_k-bnd}
    \begin{aligned}
        \sum_{k=1}^K\cE_k \leq \; & \eta_K \innp{\mM_{\vx}(\vx_K - \vx_{K-1}), \vv - \vy_K} - \eta_K \innp{\mM^{\top}_{\vy}(\vy_K - \vy_{K-1}), \vu - \vx_K}\\
        &+ \eta_K \innp{\mM_{\vy}(\vx_K - \vxh_{K}), \vv - \vy_K} - \eta_K \innp{\mM^{\top}_{\vx}(\vy_K - \vyh_{K}), \vu - \vx_K}\\
        &- \frac{c_\psi \mu_y}{2(\mu_x + \mu_y)} \|\vy_K - \vy_{K-1}\|^2\\
        &- \frac{c_\phi \mu_x}{2(\mu_x + \mu_y)} \|\vx_K - \vx_{K-1}\|^2. 
    \end{aligned}
\end{equation}
Observe that $\Gap^{\vu, \vv}(\cdot, \cdot)$ is linear in both its arguments. Hence, $\Gap^{\vu, \vv}(\vxb_K, \vyb_K) = \frac{1}{H_K}\sum_{k=1}^K \eta_k \Gap^{\vu, \vv}(\vx_k, \vy_k).$ Applying Proposition~\ref{prop:construction} and combining with \eqref{eq:summed-E_k-bnd}, we now have
    \begin{align}
        H_K \Gap^{\vu, \vv}(\vxb_k, \vyb_k) \leq\; & D_{\phi}(\vu, \vx_0) + D_{\psi}(\vv, \vy_0)\notag\\
        &+\eta_K \innp{\mM_{\vx}(\vx_K - \vx_{K-1}), \vv - \vy_K} - \eta_K \innp{\mM^{\top}_{\vy}(\vy_K - \vy_{K-1}), \vu - \vx_K}\notag\\
        &+ \eta_K \innp{\mM_{\vy}(\vx_K - \vxh_{K}), \vv - \vy_K} - \eta_K \innp{\mM^{\top}_{\vx}(\vy_K - \vyh_{K}), \vu - \vx_K}\notag\\
        &- \frac{c_\psi \mu_y}{2(\mu_x + \mu_y)} \|\vy_K - \vy_{K-1}\|^2
        - \frac{c_\phi \mu_x}{2(\mu_x + \mu_y)} \|\vx_K - \vx_{K-1}\|^2\notag\\
        &- D_{\phi}(\vu, \vx_K) - D_\psi(\vv, \vy_K). \label{eq:gap-final-bnd-1}
    \end{align}
To complete bounding the gap, it remains to argue that the right-hand side of \eqref{eq:gap-final-bnd-1} is bounded by $D_{\phi}(\vu, \vx_0) + D_{\psi}(\vv, \vy_0) - \frac{\mu_x}{\mu_x + \mu_y}D_{\phi}(\vu, \vx_K) - \frac{\mu_y}{\mu_x + \mu_y} D_\psi(\vv, \vy_K).$ This is done using the same sequence of arguments as in bounding $\cE_k$ and is omitted.

Let $(\vx^*, \vy^*) \in \cX \times \cY$ be any primal-dual solution to \eqref{eq:problem-pd}. Then $\Gap^{(\vx^*, \vy^*)}(\vxb_K, \vyb_K) \geq 0$ and we can conclude that
\begin{align*}
 \frac{\mu_x}{\mu_x + \mu_y}D_{\phi}(\vx^*, \vx_K) + 
   \frac{\mu_y}{\mu_x + \mu_y} D_\psi(\vy^*, \vy_K)
   \leq  D_{\phi}(\vx^*, \vx_0) + D_{\psi}(\vy^*, \vy_0).
\end{align*}
Further, using that $D_{\phi}(\cdot, \cdot) \geq 0,$ $D_{\psi}(\cdot, \cdot) \geq 0,$ we can also conclude that
\begin{align*}
    \sup_{\vu \in \cX, \vv \in \cY} \Gap^{\vu, \vv}(\vxb_K, \vyb_K) &= \sup_{\vu \in \cX, \vv \in \cY} \{\innp{\mM \vxb_K, \vv} - \innp{\mM \vu, \vyb_K}\}\\
    & \leq \frac{\sup_{\vu \in \cX, \vv \in \cY}\{D_{\phi}(\vu, \vx_0) + D_{\psi}(\vv, \vy_0)\}}{H_K}.
\end{align*}
Finally, setting $\eta_k = \frac{\sqrt{c_\phi c_\psi}}{\mu_x + \mu_y}$ for all $k \geq 1$ immediately leads to the conclusion that $H_K = K \frac{\sqrt{c_\phi c_\psi}}{\mu_x + \mu_y},$ as $H_K = \sum_{k=1}^K \eta_k,$ by definition. The last bound is by setting $\frac{\sup_{\vu \in \cX, \vv \in \cY}\{D_{\phi}(\vu, \vx_0) + D_{\psi}(\vv, \vy_0)\}}{H_K} \leq \epsilon,$ and solving for $K.$
\end{proof}
\subsection{Algorithm Implementation Details}
\label{sec:implementation_specific}
In \Cref{algo:implementation}, we present an implementation-specific version of ECyclicPDA, in order to make it clear that our algorithm can be implemented without any extra computation compared to the computation needed for gradient and prox updates in \mprox{}. Note that \mprox{} performs two gradient computations and two prox computations per player, due to how it achieves ``extrapolation''; we want to argue that we perform an equivalent number operations as needed for a \emph{single} gradient computation and prox computation per player. Note that the overall complexity of first-order methods when applied to EFGs is dominated by the gradient and prox update computations; this is why we compare our algorithm to \mprox{} on this basis. The key differences from Algorithm \ref{algo:main} are that we explicitly use $\hat{\vx}_k$ and $\hat{\vy}_k$ to represent the behavioral strategy that is computed via the partial prox updates (which are then scaled at the end of a full iteration of our method to $\vx_k$ and $\vy_k$), and that we use $\hat{\vh}^{j}_k$ and $\hat{\vg}^{j}_k$ to accumulate gradient contributions from decision points that occur underneath $j$, to make the partial prox update explicit.

In Lines \ref{alg:gradx} and \ref{alg:grady}, we are only dealing with the columns and rows, respectively, of the payoff matrix that correspond to the current block number $t$, which means that as $t$ ranges from $1$ to $s$, for the computation of the gradient, we will only consider each column and row, respectively, once, as would have to be done in a full gradient computation for \mprox{}.

The more difficult aspect of the implementation is ensuring that we do the same number of operations for the prox computation in ECyclicPDA as an analogous single prox computation in \mprox. We achieve this by applying the updates in Proposition \ref{prop:prox_update} only for the decision points in the current block, in Lines \ref{alg:partialproxx_start} to \ref{alg:partialproxx_end} for $\vx$ and \ref{alg:partialproxy_start} to \ref{alg:partialproxy_end} for $\vy$.

We focus on the updates for $\vx$; the argument is analogous for $\vy$. When applying this local prox update for decision point $j \in \cJX^{(t)}$, we have already correctly computed $\hat{\vh}^j$, the contributions to the gradient for the local prox update that originate from the children of $j$, again because the blocks represent the treeplex ordering; in particular, whenever we have encountered a child decision point of $j$ in the past, we accumulate its contribution to the gradient for its parent at $\hat{\vh}^{p_j}$. Since the prox update decomposition from Proposition \ref{prop:prox_update} has to be applied for every single decision point in $\cJX$ in a full prox update (as done in \mprox{}), we again do not incur any dependence on the number of blocks.

\begin{algorithm}
\caption{Extrapolated Cyclic Primal-Dual EFG Solver (Implementation Version)}\label{algo:implementation}
\begin{algorithmic}[1]
\State \textbf{Input:} $\mM, m, n$
\State \textbf{Initialization:} $\vx_0 \in \cX,$ $\vy_0 \in \cY, \eta_0 =H_0 = 0, \eta = \frac{\sqrt{c_\phi c_\psi}}{\mu_x + \mu_y}, \vxb_0 = \vx_0, \vyb_0 = \vy_0$, $\vg_0 = \zeros,$ $\vh_0 = \zeros$
\For{$k = 1: K$}
\State Choose $\eta_k \leq \eta$, $H_k = H_{k-1} + \eta_k$
\State $\vg_{k} = \zeros ,$ $\vh_k =\zeros$, $\hat{\vg}_{k} = \zeros ,$ $\hat{\vh}_k = \zeros$

\State $\vxt_k = \vx_{k-1} + \frac{\eta_{k-1}}{\eta_k}(\vx_{k-1} - \vx_{k-2})$, $\vyt_{k} = \vy_{k-1} + \frac{\eta_{k-1}}{\eta_k}(\vy_{k-1} - \vy_{k-2})$
\label{alg:tilde}
\For{$t = 1:s$}

\State $\vh_{k}^{(t)} = (\mM^{(:, t)})^{\top} \vyt_k$
\label{alg:gradx}
\For {$j \in \cJX^{(t)}$}
\label{alg:partialproxx_start}
\State $\hat{\vx}_k^{j} = \argmin_{\vb^j \in \Delta^{n_j}} \left \{\left\langle \hat{\vh}^j_k + \vh^j_k, \vb^j \right\rangle  + D_{\phi^j}\left(\vb^j , \hat{\vx}^{j}_{k-1} \right) \right \}$
\State $(j', a) = p_j$
\State ${\hat{h}}^{p_j}_k \mathrel{+}= \left[\phi^{\downarrow j^*}\left(-\vh^{\downarrow j}_k + \nabla \phi^{\downarrow j}\left({x}^{\downarrow j}_{k-1}\right)\right) - \phi^{j}\left(\hat{\vx}^{j}_{k-1}\right) + \left\langle \nabla \phi^{j}\left(\hat{\vx}^{j}_{k-1}\right), \hat{\vx}^{j}_{k-1} \right\rangle \right]$
\EndFor
\label{alg:partialproxx_end}
\State $\vg_k^{(t)} = \mM^{(t, :)}\vxt_k$ \label{alg:grady} 
\For {$j \in \cJY^{(t)}$}
\label{alg:partialproxy_start}
\State $\hat{\vy}_k^{j} = \argmin_{\vb^j \in \Delta^{n_j}} \left \{\left\langle \hat{\vg}^j_k + \vg^j_k, \vb^j \right\rangle  + D_{\psi^j}\left(\vb^j , \hat{\vy}^{j}_{k-1} \right) \right \}$
\State $(j', a) = p_j$
\State ${\hat{h}}^{p_j}_k \mathrel{+}= \left[\psi^{\downarrow j^*}\left(-\vg^{\downarrow j}_k + \nabla \psi^{\downarrow j}\left({y}^{\downarrow j}_{k-1}\right)\right) - \psi^{j}\left(\hat{\vy}^{j}_{k-1}\right) + \left\langle \nabla \psi^{j}\left(\hat{\vy}^{j}_{k-1}\right), \hat{\vy}^{j}_{k-1} \right\rangle \right]$
\EndFor
\label{alg:partialproxy_end}
\For {$j \in \cJX^{(t)}$}
\State $\vxt_k^{j} = \Big[\hat{\vx}_k^{j} x_{k-1}^{p_j} + \frac{\eta_{k-1}}{\eta_k}\Big(\vx_{k-1}^j - \hat{\vx}_{k-1}^j x_{k-2}^{p_j}\Big)\Big]$
\EndFor
\label{alg:xtildeupdate}
\For {$j \in \cJY^{(t)}$}
\State $\vyt_k^{j} = \Big[\hat{\vy}_k^j y_{k-1}^{p_j} + \frac{\eta_{k-1}}{\eta_k}\Big(\vy_{k-1}^j - \hat{\vy}_{k-1}^j y_{k-2}^{p_j}\Big)\Big]$
\EndFor
\label{alg:ytildeupdate}
\EndFor
\For{$j \in \cJX$}
\State ${\vx}^{j}_k = {x}^{p_j}_k \cdot \hat{\vx}^j_k$
\EndFor
\For{$j \in \cJY$}
\State ${\vy}^{j}_k = {y}^{p_j}_k \cdot \big(\frac{{\vy}^{j}_k}{{y}^{p_j}_k}\big)$
\EndFor
\State $\vxb_k = \frac{H_k - \eta_k}{H_k}\vxb_{k-1} + \frac{\eta_k}{H_k}{\vx}_k, \; \vyb_k = \frac{H_k - \eta_k}{H_k}\vyb_{k-1} + \frac{\eta_k}{H_k}{\vy}_k$
\EndFor
\State \textbf{Return:} $\vxb_K, \vyb_K$
\end{algorithmic}
\end{algorithm}

\section{Experimental Evaluation and Discussion}

We evaluate the performance of \ecpda{} instantiated with three different dilated regularizers: dilated entropy \citep{kroer2020faster}, dilatable global entropy \citep{farina2021better}, and dilated $\ell^2$ \citep{farina2019optimistic}. In the case of the dilated $\ell^2$ regularizer, we use dual averaging of the ``extrapolated'' vectors $\vg_k$ and $\vh_k$ in our algorithm, since otherwise we have no guarantee that the iterates would remain in the relative interior of the domain of the dilated DGF, and the Bregman divergence may become undefined. 

In all experiments, we run for $10{,}000$ full (or equivalent) gradient computations. This corresponds to $5{,}000$ iterations of \ecpda{}, \cfrp{}, and \pcfrp{}, and $2{,}500$ iterations of \mprox{}.\footnote{Here we count one gradient evaluation for $\vx$ and one for $\vy$ as two gradient evaluations total.}
We compare our method to \mprox{}, which is state-of-the-art among first-order methods for EFG solving. We test \ecpda{} and \mprox{} with three different averaging schemes: uniform, linear, and quadratic averaging since \citet{gao2021increasing} suggest that these different averaging schemes can lead to faster convergence in practice. We also compare against empirical state-of-the-art \cfrp{} variants: \cfrp{}~\citep{tammelin2014solving}, and the predictive \cfrp{} variant (\pcfrp{})~\citep{farina2021faster}. We emphasize that our method achieves the same $O(\frac{1}{T})$ average-iterate convergence rate as \mprox, and that all the CFR+ variants have the same suboptimal $O(\frac{1}{\sqrt{T}})$ average-iterate convergence rate. We experiment on four standard benchmark games for EFG solving: Goofspiel (4 ranks), Liar's Dice, Leduc (13 ranks), and Battleship. A description of all games is provided in \Cref{sec:game_descriptions}. 

 For each instantiation of \ecpda{} considered on a given game (choice of regularizer, averaging, and block construction strategy) the stepsize is tuned by taking power of 2 multiples of $\eta$ ($2^l \cdot \eta$ for $l \in \mathbb{N}$), where $\eta$ is the theoretical stepsize stated in Theorem \ref{thm:main}, and then choosing the stepsize $\eta^*$ among these multiples of $\eta$ that has the best performance. Within the algorithm, we use a constant stepsize, letting $\eta_k = \eta_0$ for all $k$. We apply the same tuning scheme for \mprox{} stepsizes (for a given choice of regularizer and averaging). Note that this stepsize tuning is coarse, and so it is possible that better results can be achieved for \ecpda{} and \mprox{} using finer stepsize tuning.
 
We test our algorithm with four different block construction strategies. The \emph{single block} construction strategy puts every decision point in a single block, and thus it corresponds to the non-block-based version of \ecpda{}.
The \emph {children} construction strategy iterates through the decision points of the treeplex bottom-up (by definition, this will respect the treeplex ordering), and placing each set of decision points that have parent sequences starting from the same decision point in its own block. In \Cref{algo:children} we provide pseudocode for constructing blocks using the children block construction strategy. In our implementation, instead of doing a bottom-up traversal, we do a top down implementation, and at the end, reverse the order of the blocks (this allows us to respect the treeplex ordering).
The \emph{postorder} construction strategy iterates through the decision points bottom-up (again, by definition, this will respect the treeplex ordering). The order is given by a postorder traversal of the treeplex, treating all decision points that have the same parent sequence as a single node (and when the node is processed, all decision points are sequentially added to the block). It greedily makes blocks as large as possible, while only creating a new block if it causes a parent decision point and child decision point to end up in the same block. We make this postorder traversal and greedy block construction explicit in \Cref{algo:postorder}.
The \emph{infosets} construction strategy places each decision point in its own block. In both \Cref{algo:postorder} and \Cref{algo:children}, $\emptyset$ represents the empty sequence. 

\begin{algorithm}
\caption{Postorder Block Construction}\label{algo:postorder}
\begin{algorithmic}[1]
\Procedure{PostorderHelper}{$j,a$}{}
\State $\text{accumulator} = []$
\For {$j' \in \cC_{j,a}$}
\For{$a' \in A_{j'}$}
\State accumulator.insert(postorder($j'$, $a'$))
\EndFor
\EndFor
\For {$j' \in \cC_{j,a}$}
\State accumulator.insert($j'$)
\EndFor
\State \textbf{Return:} $\text{accumulator}$
\EndProcedure
\Procedure{PostorderBlocks}{$\cJ$}{}
\State ordered $=$  \Call{PostorderHelper}{$\emptyset$}
\State $\text{blocks} = []$
\State $\text{current\_block} = []$
\For {$j \in \text{ordered}$}
    \If{$\exists j' \in \text{current\_block s.t. } j' \text { is a child decision point of } j $}
        \State blocks.insert(current\_block)
        \State $\text{current\_block} = [j]$
    \Else
        \State current\_block.insert($j$)
    \EndIf
\EndFor
\State \textbf{return:} blocks
\EndProcedure
\end{algorithmic}
\end{algorithm}

\begin{algorithm}
\caption{Children Block Construction}\label{algo:children}
\begin{algorithmic}[1]
\Procedure{ChildrenBlocks}{$\cJ$}{}
\State $\text{blocks} = []$
\State $\text{explore} = \cC_{\emptyset}$
\For {$j \in \text{explore}$}
    \State $\text{current\_block} = []$
    \For {$a \in A_j$}
        \For{$j' \in \cC_{j, a}$}
        \State current\_block.insert(j')
        \State explore.insert(j')
        \EndFor
    \EndFor
    blocks.insert(current\_block)
\EndFor
\State \textbf{return:} blocks.reverse()
\EndProcedure
\end{algorithmic}
\end{algorithm}

We can now illustrate each of the block construction strategies on the treeplex for player 1 in Kuhn that was presented in \Cref{fig:kuhn_treeplex}. If we use single block, then we have $\cJX^{(1)} = \cJX = \{0, 1, 2, 3, 4, 5, 6\}$. If we use infosets, then we have $\cJX^{(i)} = \{7-i\}$ for $i \in \{1, 2, 3, 4, 5, 6, 7\}$ (we have to subtract in order to label the infosets in a manner that respects the treeplex ordering). If we use children, then we have $\cJX^{(1)} = \{4\}, \cJX^{(2)} = \{5\}, \cJX^{(3)} = \{6\}, \cJX^{(4)} = \{1, 2, 3\}$, and $\cJX^{(5)} = \{0\}$. If we use postorder, then we have $\cJX^{(1)} = \{4, 5, 6\}$, $\cJX^{(2)} = \{1,2,3\}$, and $\cJX^{(3)} = \{0\}$. 

Note that in the implementation of our algorithm, it is not actually important that the number of blocks for both players are the same; if one player has more blocks than the other, for iterations of our algorithm that correspond to block numbers that do not exist for the other player, we just do not do anything for the other player. Nevertheless, the output of the algorithm does not change if we combine all the blocks for the player with more blocks after the minimum number of blocks between the two players is exceeded, into one block. For example, if player 1 has $s_1$ blocks, and player 2 has $s_2$ blocks, with $s_1 < s_2$, we can actually combine blocks $s_1 + 1, \dots, s_2$ all into the same block for player 2, and this would not change the execution of the algorithm. This is what we do in our implementation. 

Additionally, given a choice of a partition of decision points into blocks, there may exist many permutations of decision points within the blocks which satisfy the treeplex ordering of the decision points. Unless the game that is being tested upon possesses some structure which leads to a single canonical ordering of the decision points (which respects the treeplex ordering), an arbitrary decision needs to be made regarding what order is used. 

We show the results of different block construction strategies in \Cref{fig:block_construction_no_restarts}. For each block construction strategy, \ecpda{} is instantiated with the choice of regularizer and averaging that yields the fastest convergence among all choices of parameters. We can see that the different block construction strategies do not make a significant difference in Goofspiel (4 ranks) or in Leduc (13 ranks). However, we see benefits of using blocks in Liar's Dice and Battleship. In Liar's Dice, children and postorder have a clear advantage, and children outperforms the other block construction strategies in Battleship. We also observe that infosets performs worse than using a single block in all games. 

We show the results of comparing our algorithm against \mprox{}, \cfrp{}, and \pcfrp{} in \Cref{fig:norestarts}. \ecpda{} is instantiated with the choice of regularizer, averaging, and block construction strategy that yields the fastest convergence among all choices for \ecpda{}, and \mprox{} is instantiated with the choice of regularizer and averaging that yields the fastest convergence among all choices for \mprox{}. We see that \ecpda{} performs better than \mprox{} in all games besides Goofspiel (4 ranks), where they perform about the same. 
In Liar's Dice and Battleship, the games where \ecpda{} benefits from having multiple blocks, we see competitiveness with \cfrp{} and \pcfrp{}. In particular, in Liar's Dice, \ecpda{} is overtaking \cfrp{} at $10^4$ gradient computations. On Battleship, we see that both \ecpda{} and \mprox{} outperform \cfrp{}, and that \ecpda{} is competitive with \pcfrp{}. 

\begin{figure}
\centering
\includeinkscape[inkscapelatex=false, width = \linewidth]{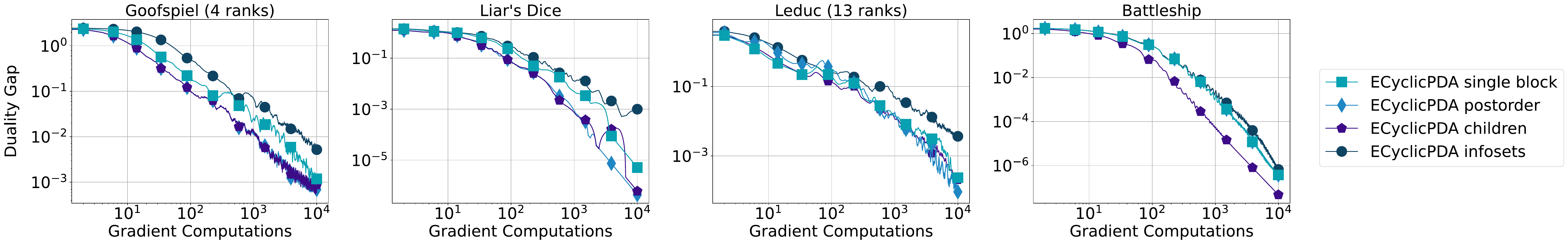}
\caption{Duality gap as a function of the number of full (or equivalent) gradient computations for \ecpda{} with different block construction strategies.}
\label{fig:block_construction_no_restarts}
\end{figure}

\begin{figure}
\centering
\includeinkscape[inkscapelatex=false, width = \linewidth]{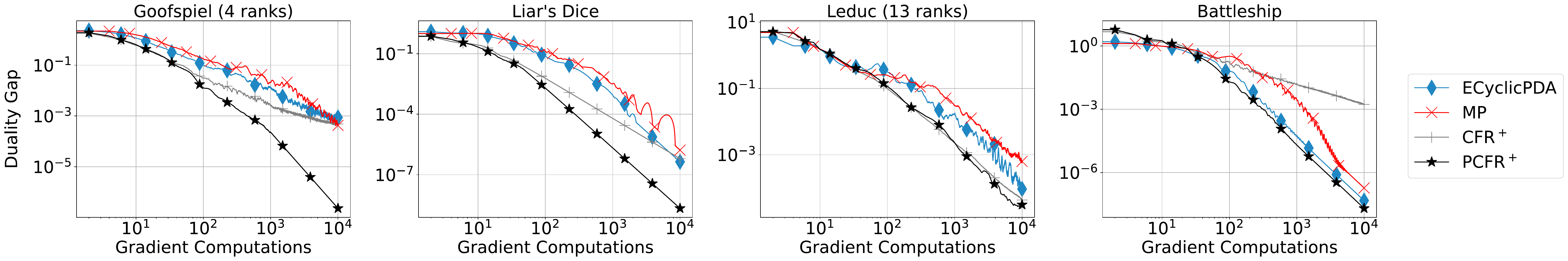}
\caption{Duality gap as a function of the number of full (or equivalent) gradient computations for \ecpda{}, \mprox{}, \cfrp{}, and \pcfrp{}.}
\label{fig:norestarts}
\end{figure}
\paragraph{Restarting.}

We now introduce restarting as a heuristic tool for speeding up EFG solving.
While restarting is only known to lead to a linear convergence rate in the case of using the $\ell_2$ regularizer in certain FOMs~\citep{gilpin2012first,applegate2022faster}, we apply restarting as a heuristic across our methods based on dilated regularizers and to CFR-based methods. To the best of our knowledge, restarting schemes have not been empirically evaluated on EFG algorithms such as MP, \cfrp, or (obviously), our new method.

We show the results of different block construction strategies when restarting is used on \ecpda{} in \Cref{fig:block_construction_restarts}. As before, we take the combination of regularizer and averaging scheme that works best. Again, we can see that the different block construction strategies do not make a significant difference in Goofspiel (4 ranks) or in Leduc (13 ranks), while making a difference for Liar's Dice and Battleship. However, with restarting, the benefit of the children and postorder strategies for Liar's Dice and Battleship is even more pronounced relative to the other block construction strategies; the gap is a couple order of magnitude for Liar's Dice and many orders of magnitude for Battleship after $10^4$ 
gradient computations.

Finally, we compare the performance of the restarted version of our algorithm, with restarted versions of \mprox{}, \cfrp{}, and \pcfrp{} in \Cref{fig:restarts}. As before, we take the combination of regularizer, averaging scheme, and block construction strategy that works best for \ecpda{}, and the combination of regularizer and averaging scheme that works best for \mprox{}. Firstly, we note that the scale of the y-axis is different from \Cref{fig:norestarts} for all games besides Leduc (13 ranks), because restarting tends to hit much higher levels of precision. 
We see that restarting provides significant benefits for \pcfrp{} in Goofspiel (4 ranks) allowing it to converge to numerical precision, while the other algorithms do not benefit much. In Liar's Dice, restarted \cfrp{} and \pcfrp{} converge to numerical precision within $200$ gradient computations, and restarted \ecpda{} converges to numerical precision at $10^4$ gradient computations. Additionally, restarted \mprox{} achieves a much lower duality 
gap. For Battleship, \ecpda{}, \mprox{}, and \pcfrp{} all benefit from restarting, and restarted \ecpda{} is competitive with restarted \pcfrp{}. Similar to the magnification in benefit of using blocks versus not using blocks when restarting in Liar's Dice and Battleship, we see that restarted \ecpda{} achieves significantly better duality gap than \mprox{} in these games.

\begin{figure}
\centering
\includeinkscape[inkscapelatex=false, width = \linewidth]{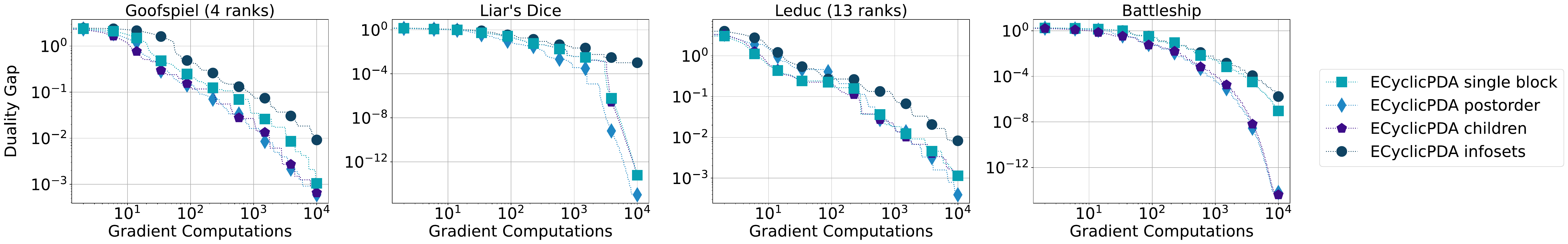}
\caption{Duality gap as a function of the number of full (or equivalent) gradient computations for when restarting is applied to \ecpda{} with different block construction strategies. We take the best duality gap seen so far so that the plot is monotonic.}
\label{fig:block_construction_restarts}
\end{figure}

\begin{figure}
\centering
\includeinkscape[inkscapelatex=false, width = \linewidth]{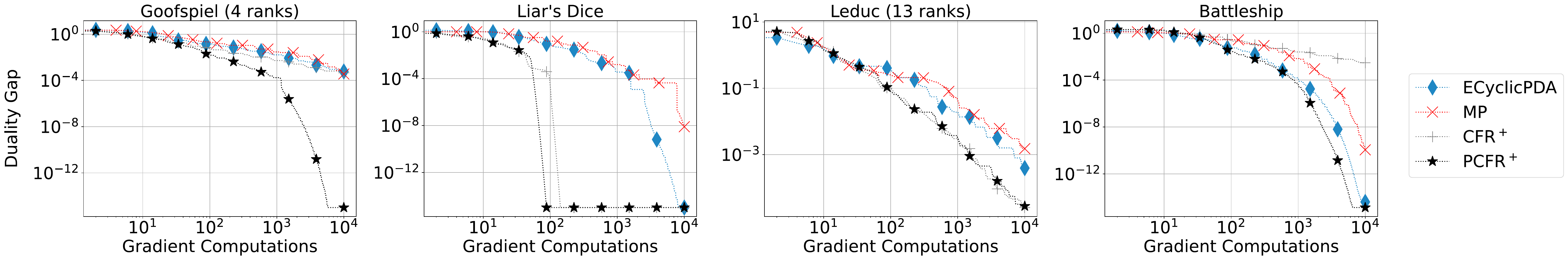}
\caption{Duality gap as a function of the number of full (or equivalent) gradient computations for when restarting is applied to \ecpda{}, \mprox{}, \cfrp{}, and \pcfrp{}. We take the best duality gap seen so far so that the plot is monotonic.}
\label{fig:restarts}
\end{figure}

\paragraph{Discussion.} We develop the first cyclic block-coordinate-like method for two-player zero-sum EFGs. Our algorithm relies on the recursive nature of the prox updates for dilated regularizers, cycling through blocks that respect the partial order induced on decision points by the treeplex, and extrapolation to conduct pseudo-block updates, produce feasible iterates, and achieve $O(\frac{1}{T})$ ergodic convergence. Furthermore, the runtime of our algorithm has no dependence on the number of blocks. We present empirical evidence that our algorithm generally outperforms \mprox{}, and is the first FOM to compete with \cfrp{} and \pcfrp{} on non-trivial EFGs. We are not sure why numerical performance deteriorates specifically when using infoset blocks, and leave this as a problem for future investigation.
 
Finally, we introduce a restarting heuristic for EFG solving, and demonstrate often huge gains in convergence rate. 
An open question raised by our work is understanding what makes restarting work for methods used with regularizers besides the $\ell_2$ regularizer (the only setting for which there exist linear convergence guarantees). This may be challenging because existing proofs require upper bounding the corresponding Bregman divergence (for a given non-$\ell_2$ regularizer) between iterates by the distance to optimality. This is difficult for entropy or any dilated regularizer because the initial iterate used by the algorithm after restarting may have entries arbitrarily close to zero even if they are guaranteed to not exactly be zero (as is the case for entropy). Relatedly, both our block-coordinate method and restarting have a much bigger advantage in some numerical instances (Battleship, Liar's Dice) than others (Leduc and Goofspiel); a crucial question is to understand what type of game structure drives this behavior. 

\subsection*{Acknowledgements}
Jelena Diakonikolas was supported by the Office of
Naval Research under award number N00014-22-1-2348. Christian Kroer was supported by the Office of Naval Research awards N00014-22-1-2530 and N00014-23-1-2374, and the National Science Foundation awards IIS-2147361 and IIS-2238960.

\bibliography{references}

\appendix
\section{Description of EFG Benchmarks}
\label{sec:game_descriptions}
We provide game descriptions of the games we run our experiments on below. Our game descriptions are adapted from~\citet{farina2021faster}. In \Cref{table:game_size}, we provide the number of sequences for player $\vx$ ($n$), the number of sequences for player $\vy$ ($m$), and the number of leaves in the game ($\textsc{nnz}(\mM)$).

\begin{table}[h!]
\centering
\begin{tabular}{|c c c c|} 
 \hline
 Game & Num. of $\vx$ sequences & Num. of $\vy$ sequences & Num. of leaves \\ [0.5ex] 
 \hline\hline
 Goofspiel (4 ranks) & 21{,}329 & 21{,}329 & 13{,}824 \\ 
 \hline
 Liar's Dice & 24{,}571 & 24{,}571 & 147{,}420 \\
 \hline
 Leduc (13 ranks) & 6{,}007 & 6{,}007 & 98{,}956 \\
 \hline
 Battleship & 73{,}130 & 253{,}940 & 552{,}132  \\
 \hline
\end{tabular}
\caption{Number of sequences for both players and number of leaves for each game. These correspond to the dimensions $n$ and $m$ of $\mM$, and the number of nonzero entries of $\mM$, respectively.}
 \label{table:game_size}
\end{table}

\subsection{Goofspiel (4 ranks)}
Goofspiel is a card-based game that is a standard benchmark in the EFG-solving community~\citep{ross1971goofspiel}. In the version that we test on, there are 4 unique cards (ranks), and there are 3 copies of each rank, divided into 3 separate decks. Each player gets a deck, and the third deck is known as the prize deck. Cards are randomly drawn from the prize deck, and each player submits a bid for the drawn card by submitting a card from one of their respective decks, the value of which represents their bid. Whoever submits the higher bid wins the card from the prize deck. Once all the cards from the prize deck have been drawn, bid on, and won by one of the players, the game terminates, and the payoffs for players are given by the sum of the prize cards they won. 

\subsection{Liar's Dice}
Liar's Dice is another standard benchmark in the EFG-solving community~\citep{lisy2015online}. In the version that we test on, each player rolls an unbiased six-sided die, and they take turns either calling higher bids or challenging the other player. A bid consists of a combination of a value $v$ between one and six, and a number of dice between one and two, $n$, representing the number of dice between the two players that has $v$ pips showing. A higher bid involves either increasing $n$ holding $v$ fixed, increasing $v$ holding $n$ fixed, or both. When a player is challenged (or the highest possible bid of ``two dice each showing six pips'' is called), the dice are revealed, and whoever is correct wins 1 (either the challenger if the bid is not true, or the player who last called a bid, if the bid is true), and the other player receives a utility of -1.

\subsection{Leduc (13 ranks)}
Leduc is yet another standard benchmark in the EFG-solving community~\citep{southey2012bayes} and is a simplified version of Texas Hold'Em. In the version we test on, there are 13 unique cards (ranks), and there are 2 copies of each rank (half the size of a standard 52 card deck). There are two rounds of betting that take place, and before the first round each player places an ante of 1 into the pot, and is dealt a single pocket (private) card. In addition, two cards are placed face down, and these are community cards that will be used to form hands. The two hands that can be formed with the community cards are pair, and highest card.

During the first round of betting, player 1 acts first. There is a max of two raises allowed in each round of betting. Each player can either check, raise, or fold. If a player folds, the other player immediately wins the pots and the game terminates. If a player checks, the other player has an opportunity to raise if they have not already previously checked or raised, and if they previously checked, the game moves on to the next round. If a player raises, the other player has an opportunity to raise if they have not already previously raised. The game then moves on the second round, during which one of the community cards is placed face up, and then similar betting dynamics as the first round take place. After the second round terminates, there is a showdown, and whoever can form the better hand (higher ranked pair, or highest card) with the community cards takes the pot.

\subsection{Battleship}
This is an instantiation of the classic board game, Battleship, in which players take turns shooting at the opposing player's ships. Before the game begins, the players place two ships of length 2 and value 4, on a grid of size 2 by 3. The ships need to be placed in a way so that the ships take up exactly four spaces within the grid (they do not overlap with each other, and are contained entirely in the grid). Each player gets three shots, and players take turns firing at cells of their opponent's grid. A ship is sunk when the two cells it has been placed on have been shot at. At the end of the game, the utility for a player is the difference between the cumulative value of the opponent's sunk ships and the cumulative value of the player's sunk ships.

\section{Additional Experimental Details}
\label{sec:experiments}
\paragraph{Block Construction Strategy Comparison.}
In this section, we provide additional plots (\Crefrange{fig:block_construction_no_restarts_dilent_constant}{fig:block_construction_no_restarts_dill2_quadratic}) comparing different block construction strategies for our algorithm, for specific choices of regularizer and averaging scheme. Note that for the games for which there is a benefit to using blocks (Liar's Dice and Battleship), the benefit is generally apparent across different regularizers and averaging schemes. Furthermore, when there is not a benefit for a particular regularizer and averaging scheme, there is no significant cost either (using children and postorder does not lead to worse performance). However, we note again that infosets generally leads to worse performance across regularizers and averaging schemes.

\begin{figure}[H]
\centering
\includeinkscape[inkscapelatex=false, width = \linewidth]{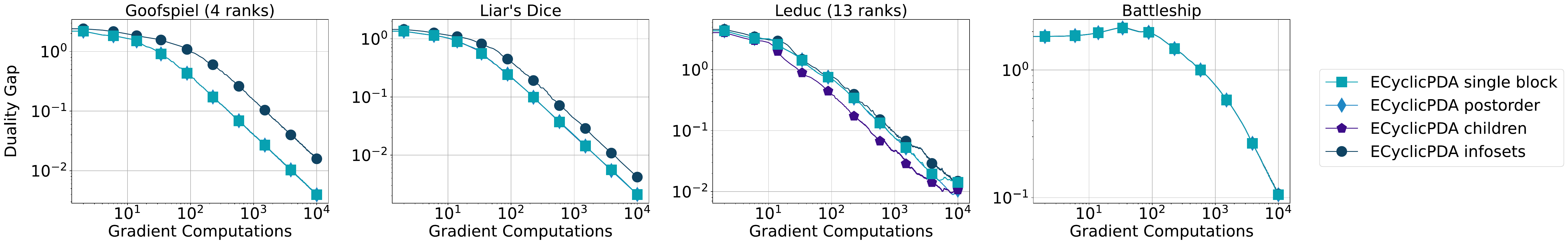}
\caption{Duality gap as a function of the number of full (or equivalent) gradient computations for \ecpda{} with different block construction strategies when using the dilated entropy regularizer and uniform averaging.}
\label{fig:block_construction_no_restarts_dilent_constant}
\end{figure}

\begin{figure}[H]
\centering
\includeinkscape[inkscapelatex=false, width = \linewidth]{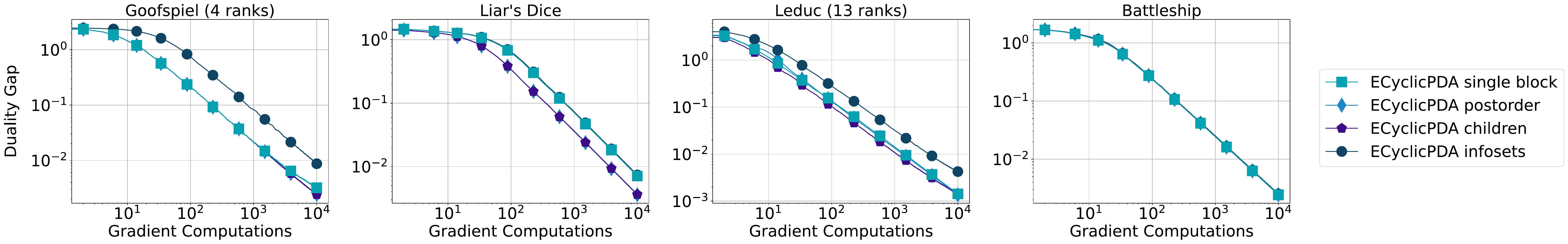}
\caption{Duality gap as a function of the number of full (or equivalent) gradient computations for \ecpda{} with different block construction strategies when using the dilatable global entropy regularizer and uniform averaging.}
\label{fig:block_construction_no_restarts_gloent_constant}
\end{figure}

\begin{figure}[H]
\centering
\includeinkscape[inkscapelatex=false, width = \linewidth]{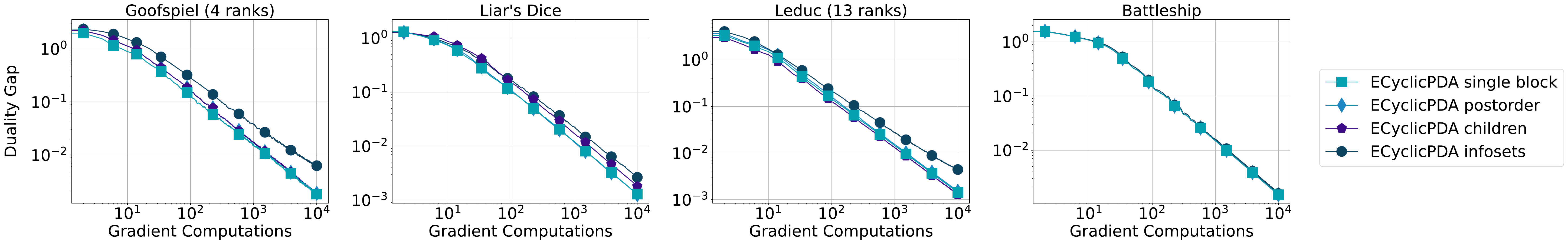}
\caption{Duality gap as a function of the number of full (or equivalent) gradient computations for \ecpda{} with different block construction strategies when using the dilated $\ell^2$ regularizer and uniform averaging.}
\label{fig:block_construction_no_restarts_dill2_constant}
\end{figure}

\begin{figure}[H]
\centering
\includeinkscape[inkscapelatex=false, width = \linewidth]{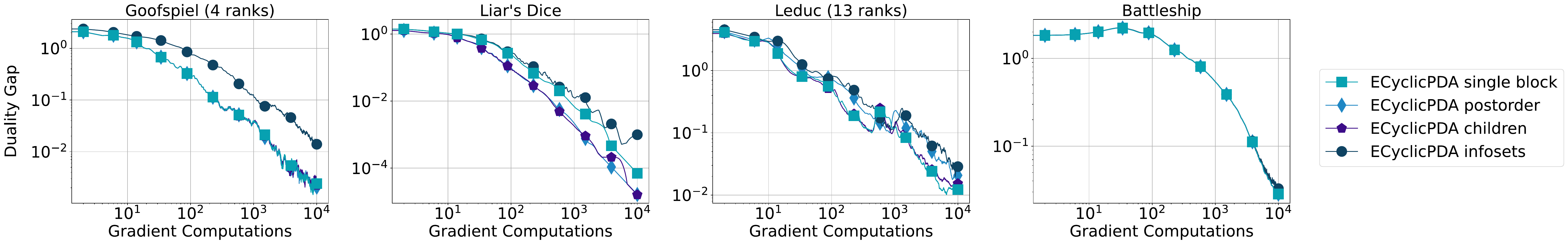}
\caption{Duality gap as a function of the number of full (or equivalent) gradient computations for \ecpda{} with different block construction strategies when using the dilated entropy regularizer and linear averaging.}
\label{fig:block_construction_no_restarts_dilent_linear}
\end{figure}

\begin{figure}[H]
\centering
\includeinkscape[inkscapelatex=false, width = \linewidth]{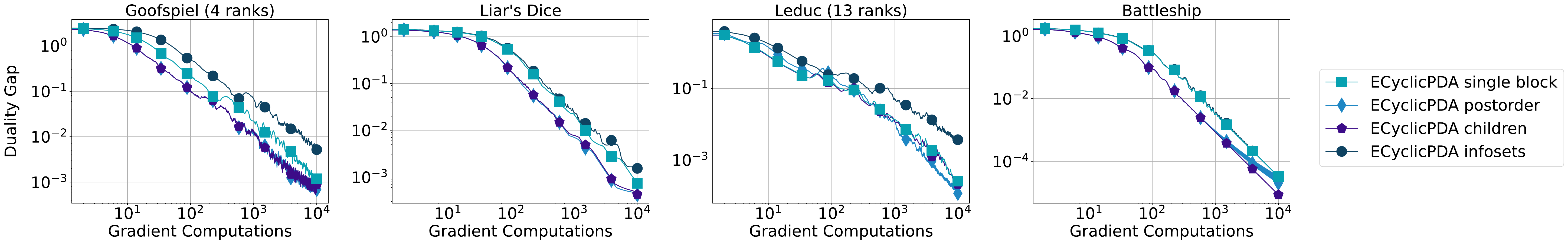}
\caption{Duality gap as a function of the number of full (or equivalent) gradient computations for \ecpda{} with different block construction strategies when using the dilatable global entropy regularizer and linear averaging.}
\label{fig:block_construction_no_restarts_gloent_linear}
\end{figure}

\begin{figure}[H]
\centering
\includeinkscape[inkscapelatex=false, width = \linewidth]{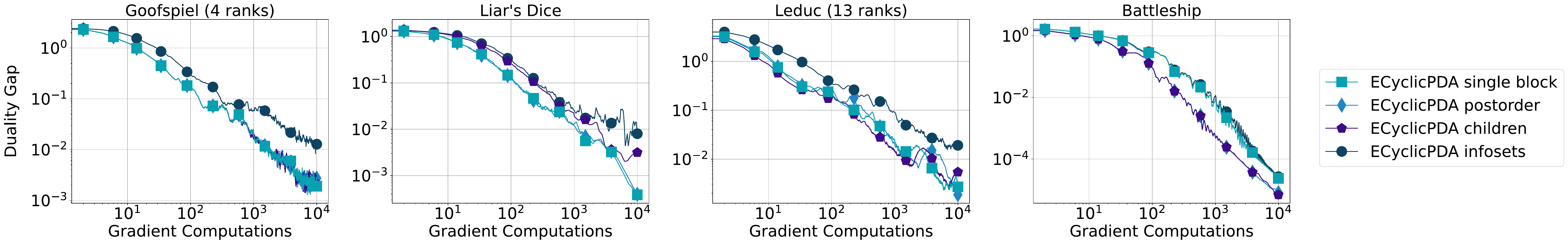}
\caption{Duality gap as a function of the number of full (or equivalent) gradient computations for \ecpda{} with different block construction strategies when using the dilated $\ell^2$ regularizer and linear averaging.}
\label{fig:block_construction_no_restarts_dill2_linear}
\end{figure}

\begin{figure}[H]
\centering
\includeinkscape[inkscapelatex=false, width = \linewidth]{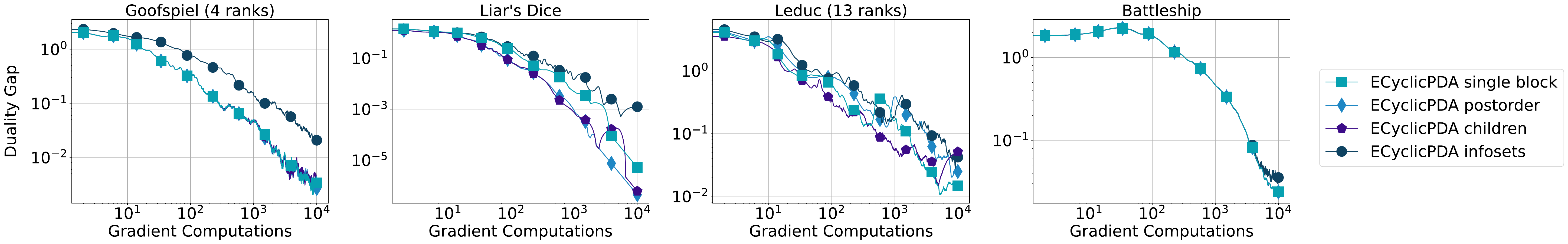}
\caption{Duality gap as a function of the number of full (or equivalent) gradient computations for \ecpda{} with different block construction strategies when using the dilated entropy regularizer and quadratic averaging.}
\label{fig:block_construction_no_restarts_dilent_quadratic}
\end{figure}

\begin{figure}[H]
\centering
\includeinkscape[inkscapelatex=false, width = \linewidth]{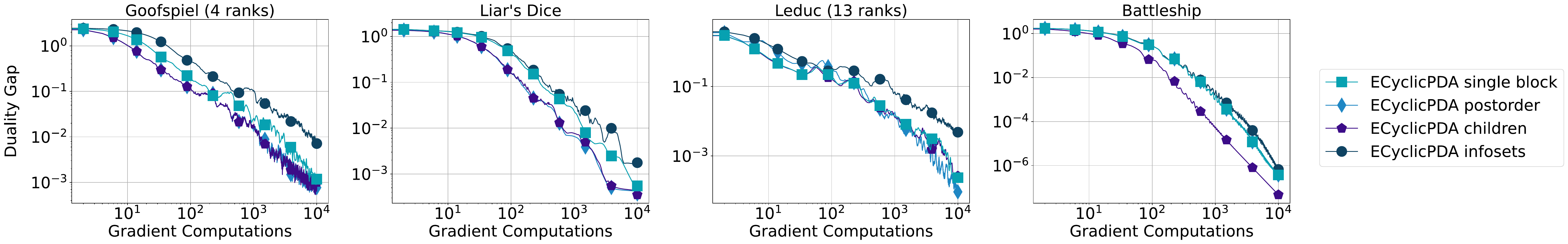}
\caption{Duality gap as a function of the number of full (or equivalent) gradient computations for \ecpda{} with different block construction strategies when using the dilatable global entropy regularizer and quadratic averaging.}
\label{fig:block_construction_no_restarts_gloent_quadratic}
\end{figure}

\begin{figure}[H]
\centering
\includeinkscape[inkscapelatex=false, width = \linewidth]{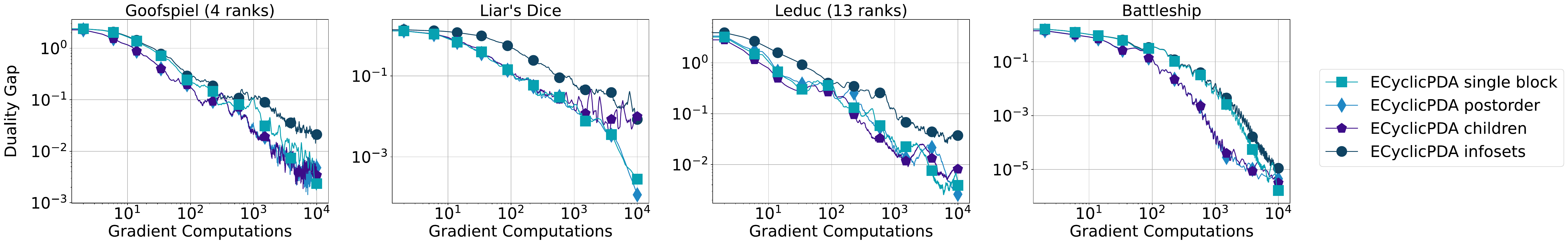}
\caption{Duality gap as a function of the number of full (or equivalent) gradient computations for \ecpda{} with different block construction strategies when using the dilated $\ell^2$ regularizer and quadratic averaging.}
\label{fig:block_construction_no_restarts_dill2_quadratic}
\end{figure}

\paragraph{Block Construction Strategy Comparison with Restarts.}
We repeat a similar analysis as above (comparing the block construction strategies holding a regularizer and averaging scheme fixed) but this time with the adaptive restarting heuristic applied to our algorithm: the plots can be seen in (\Crefrange{fig:block_construction_restarts_dilent_constant}{fig:block_construction_restarts_dill2_quadratic}). As discussed in the main body, the trend of the benefit of using blocks being more pronounced with restarting (for games for which blocks are beneficial) holds generally even when holding the regularizer and averaging scheme fixed. This can be seen by comparing each of the restarted block construction strategy comparison plots with the corresponding non-restarted block construction strategy comparison plot. 

\begin{figure}[H]
\centering
\includeinkscape[inkscapelatex=false, width = \linewidth]{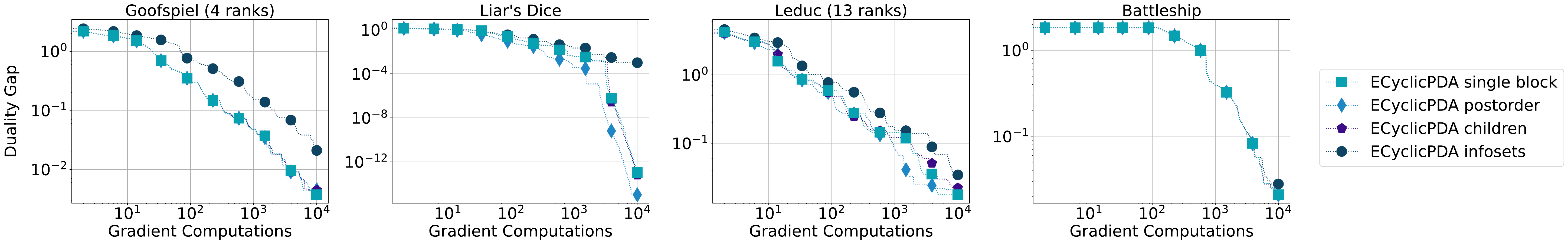}
\caption{Duality gap as a function of the number of full (or equivalent) gradient computations for \ecpda{} with different block construction strategies when using the dilated entropy regularizer and uniform averaging as well as restarting. We take the best duality gap seen so far so that the plot is monotonic.}
\label{fig:block_construction_restarts_dilent_constant}
\end{figure}

\begin{figure}[H]
\centering
\includeinkscape[inkscapelatex=false, width = \linewidth]{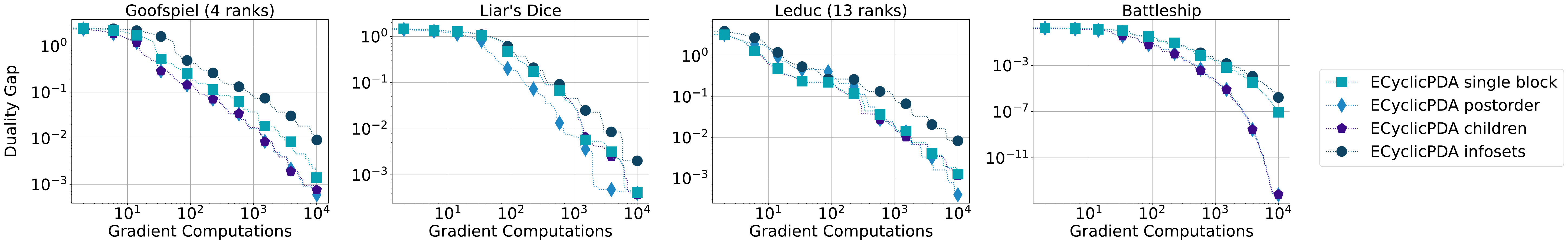}
\caption{Duality gap as a function of the number of full (or equivalent) gradient computations for \ecpda{} with different block construction strategies when using the dilatable global entropy regularizer and uniform averaging as well as restarting. We take the best duality gap seen so far so that the plot is monotonic.}
\label{fig:block_construction_restarts_gloent_constant}
\end{figure}

\begin{figure}[H]
\centering
\includeinkscape[inkscapelatex=false, width = \linewidth]{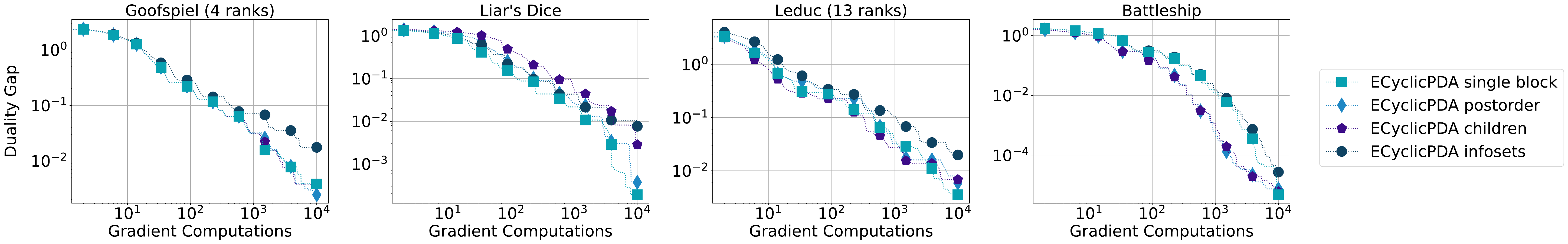}
\caption{Duality gap as a function of the number of full (or equivalent) gradient computations for \ecpda{} with different block construction strategies when using the dilated $\ell^2$ regularizer and uniform averaging as well as restarting. We take the best duality gap seen so far so that the plot is monotonic.}
\label{fig:block_construction_restarts_dill2_constant}
\end{figure}

\begin{figure}[H]
\centering
\includeinkscape[inkscapelatex=false, width = \linewidth]{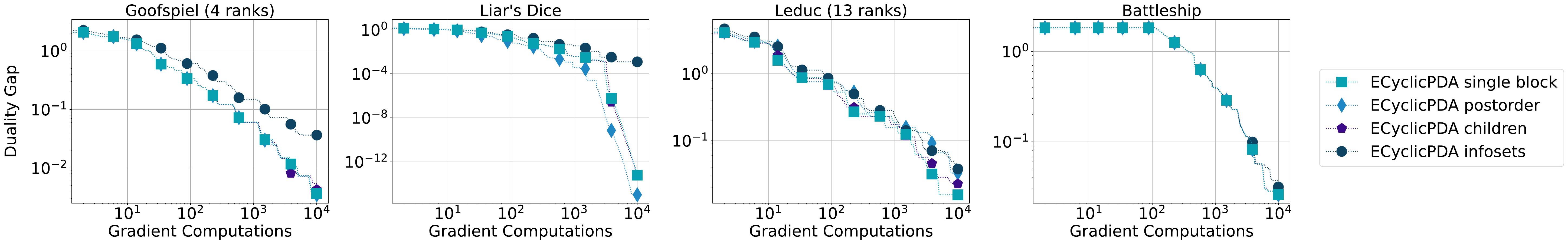}
\caption{Duality gap as a function of the number of full (or equivalent) gradient computations for \ecpda{} with different block construction strategies when using the dilated entropy regularizer and linear averaging as well as restarting. We take the best duality gap seen so far so that the plot is monotonic.}
\label{fig:block_construction_restarts_dilent_linear}
\end{figure}

\begin{figure}[H]
\centering
\includeinkscape[inkscapelatex=false, width = \linewidth]{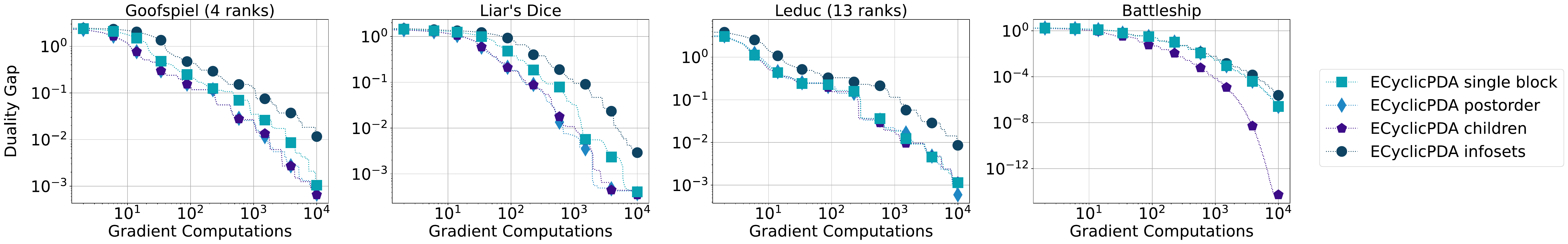}
\caption{Duality gap as a function of the number of full (or equivalent) gradient computations for \ecpda{} with different block construction strategies when using the dilatable global entropy regularizer and linear averaging as well as restarting. We take the best duality gap seen so far so that the plot is monotonic.}
\label{fig:block_construction_restarts_gloent_linear}
\end{figure}

\begin{figure}[H]
\centering
\includeinkscape[inkscapelatex=false, width = \linewidth]{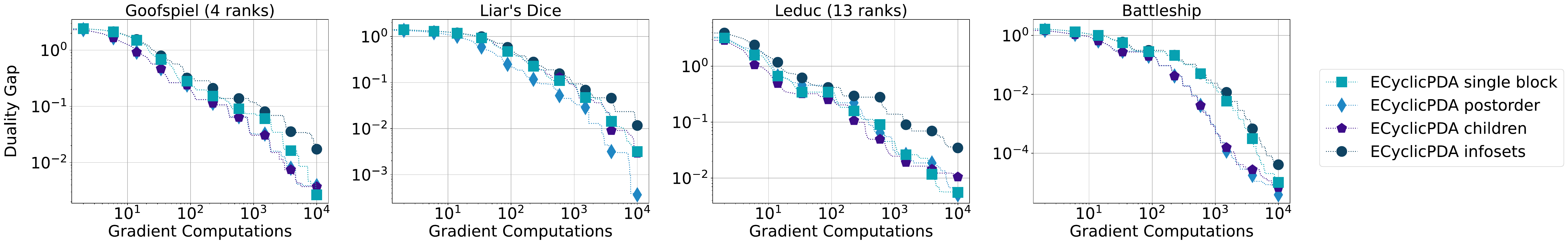}
\caption{Duality gap as a function of the number of full (or equivalent) gradient computations for \ecpda{} with different block construction strategies when using the dilated $\ell^2$ regularizer and linear averaging as well as restarting. We take the best duality gap seen so far so that the plot is monotonic.}
\label{fig:block_construction_restarts_dill2_linear}
\end{figure}

\begin{figure}[H]
\centering
\includeinkscape[inkscapelatex=false, width = \linewidth]{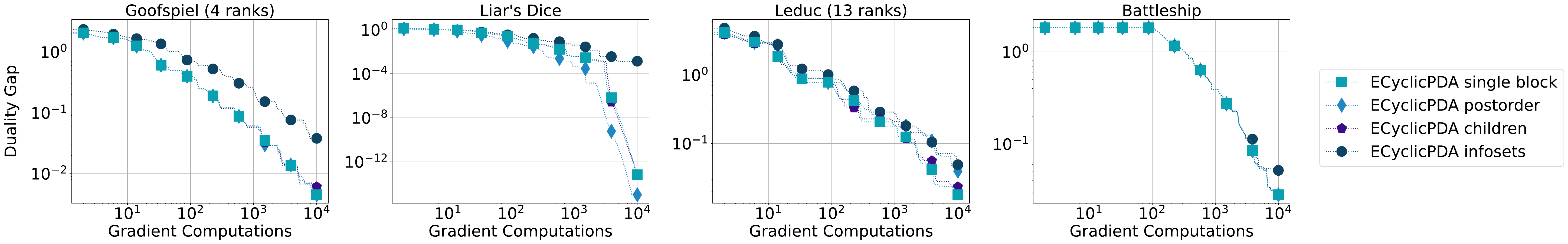}
\caption{Duality gap as a function of the number of full (or equivalent) gradient computations for \ecpda{} with different block construction strategies when using the dilated entropy regularizer and quadratic averaging as well as restarting. We take the best duality gap seen so far so that the plot is monotonic.}
\label{fig:block_construction_restarts_dilent_quadratic}
\end{figure}

\begin{figure}[H]
\centering
\includeinkscape[inkscapelatex=false, width = \linewidth]{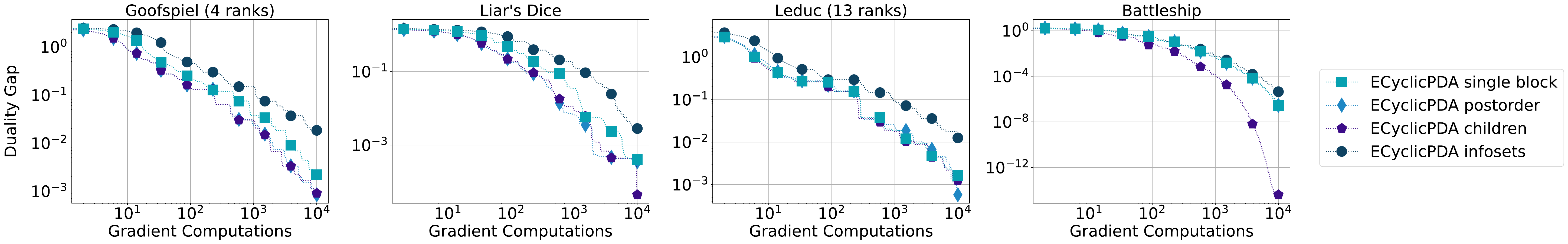}
\caption{Duality gap as a function of the number of full (or equivalent) gradient computations for \ecpda{} with different block construction strategies when using the dilatable global entropy regularizer and quadratic averaging as well as restarting. We take the best duality gap seen so far so that the plot is monotonic.}
\label{fig:block_construction_restarts_gloent_quadratic}
\end{figure}

\begin{figure}[H]
\centering
\includeinkscape[inkscapelatex=false, width = \linewidth]{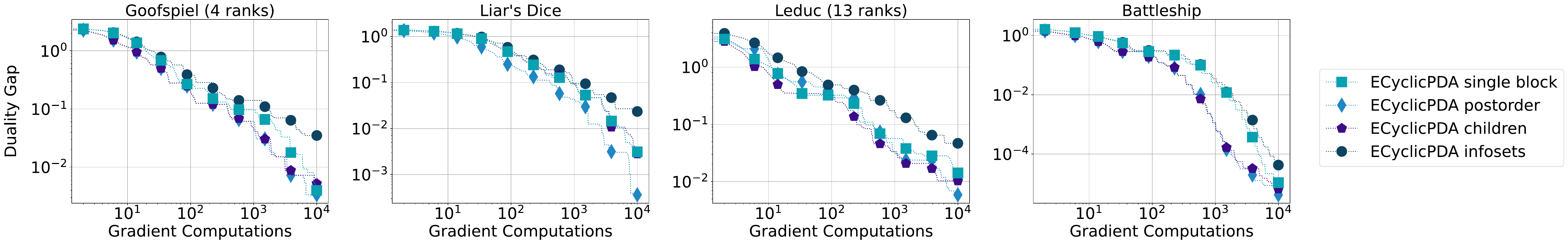}
\caption{Duality gap as a function of the number of full (or equivalent) gradient computations for \ecpda{} with different block construction strategies when using the dilated $\ell^2$ regularizer and quadratic averaging as well as restarting. We take the best duality gap seen so far so that the plot is monotonic.}
\label{fig:block_construction_restarts_dill2_quadratic}
\end{figure}

\paragraph{Regularizer Comparison.}
In this section (\Crefrange{fig:uniform_norestarts}{fig:quadratic_norestarts}) we compare the performance of \ecpda{} and \mprox{} instantiated with different regularizers for each averaging scheme, against the performance of \cfrp{} and \pcfrp{}. It is apparent from these plots, that our algorithm generally outperforms \mprox{}, holding the averaging scheme and regularizer fixed. This can be seen by examining the corresponding figure for a choice of averaging scheme, and noting that for any given regularizer, the corresponding \mprox{} line is generally above the corresponding \ecpda{} line. 

\begin{figure}[H]
\centering
\includeinkscape[inkscapelatex=false, width = \linewidth]{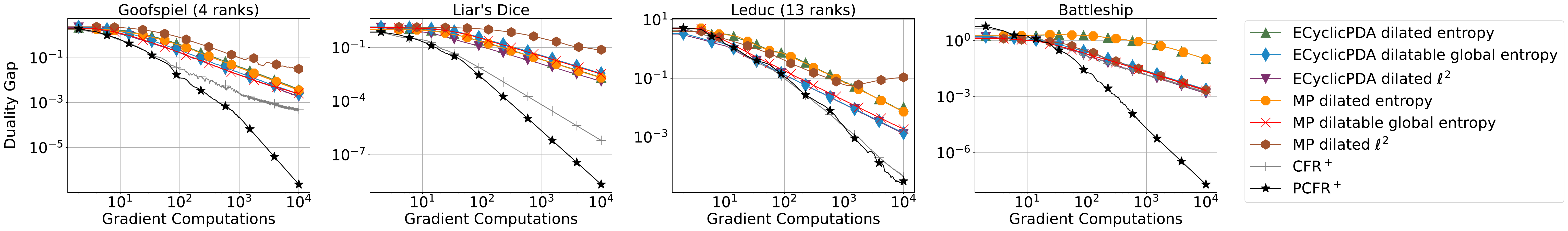}
\caption{Duality gap as a function of the number of full (or equivalent) gradient computations for \ecpda{}, \mprox{}, \cfrp{}, and \pcfrp{}, using a uniform averaging scheme for \ecpda{} and \mprox{}.}
\label{fig:uniform_norestarts}
\end{figure}

\begin{figure}[H]
\centering
\includeinkscape[inkscapelatex=false, width = \linewidth]{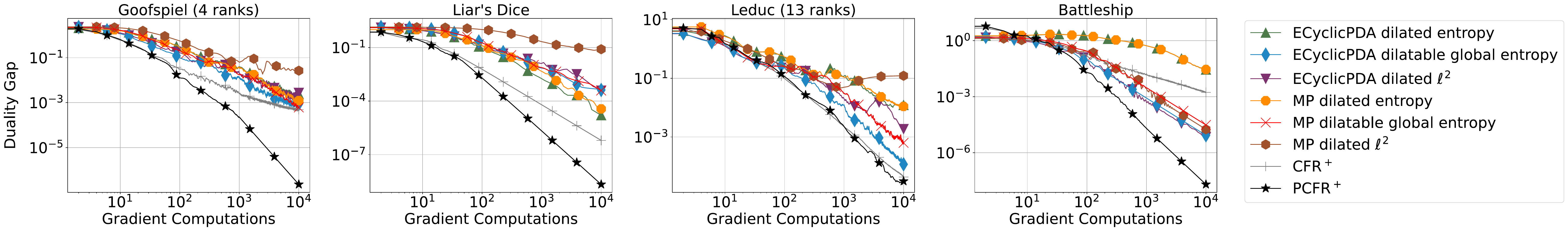}
\caption{Duality gap as a function of the number of full (or equivalent) gradient computations for \ecpda{}, \mprox{}, \cfrp{}, and \pcfrp{}, using a linear averaging scheme for \ecpda{} and \mprox{}.}
\label{fig:linear_norestarts}
\end{figure}

\begin{figure}[H]
\centering
\includeinkscape[inkscapelatex=false, width = \linewidth]{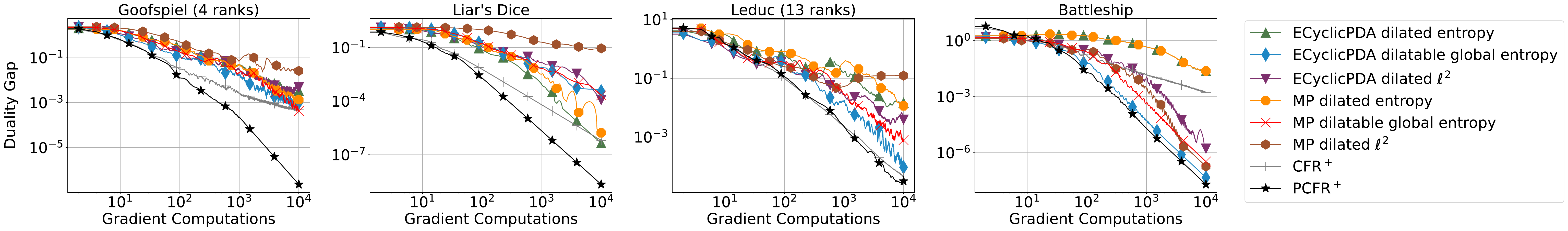}
\caption{Duality gap as a function of the number of full (or equivalent) gradient computations for \ecpda{}, \mprox{}, \cfrp{}, and \pcfrp{}, using a quadratic averaging scheme for \ecpda{} and \mprox{}.}
\label{fig:quadratic_norestarts}
\end{figure}

\paragraph{Regularizer Comparisons with Restarts.}
We repeat a similar analysis in this section (\Crefrange{fig:uniform_restarts}{fig:quadratic_restarts}), instead now comparing the performance of \ecpda{} and \mprox{} instantiated with different regularizers for each averaging scheme, against the performance of \cfrp{} and \pcfrp{}, when all methods are restarted. The trend noted above of our method generally beating \mprox{}, even holding the regularizer and averaging scheme fixed, still holds even when restarting.

\begin{figure}[H]
\centering
\includeinkscape[inkscapelatex=false, width = \linewidth]{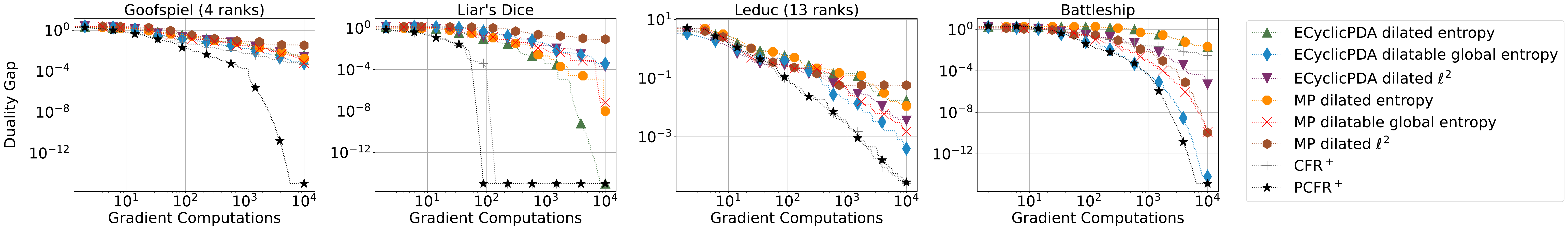}
\caption{Duality gap as a function of the number of full (or equivalent) gradient computations for when restarting is applied to \ecpda{}, \mprox{}, \cfrp{}, and \pcfrp{}, using a uniform averaging scheme for \ecpda{} and \mprox{}. We take the best duality gap seen so far so that the plot is monotonic.}
\label{fig:uniform_restarts}
\end{figure}

\begin{figure}[H]
\centering
\includeinkscape[inkscapelatex=false, width = \linewidth]{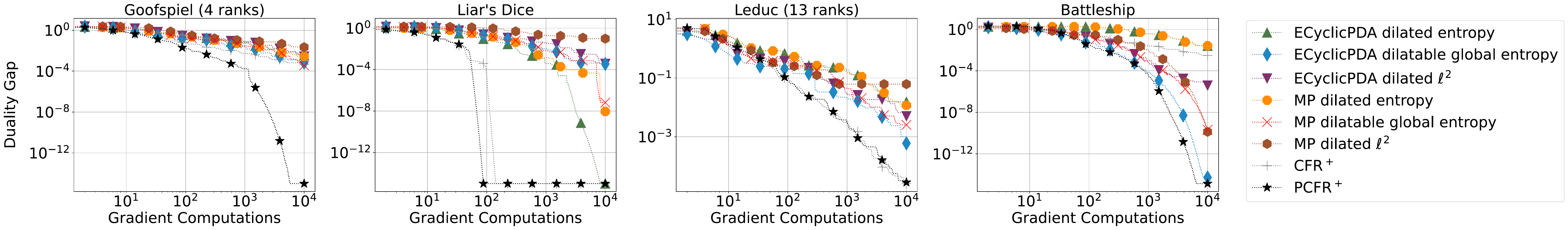}
\caption{Duality gap as a function of the number of full (or equivalent) gradient computations for when restarting is applied to \ecpda{}, \mprox{}, \cfrp{}, and \pcfrp{}, using a linear averaging scheme for \ecpda{} and \mprox{}. We take the best duality gap seen so far so that the plot is monotonic.}
\label{fig:linear_restarts}
\end{figure}

\begin{figure}[H]
\centering
\includeinkscape[inkscapelatex=false, width = \linewidth]{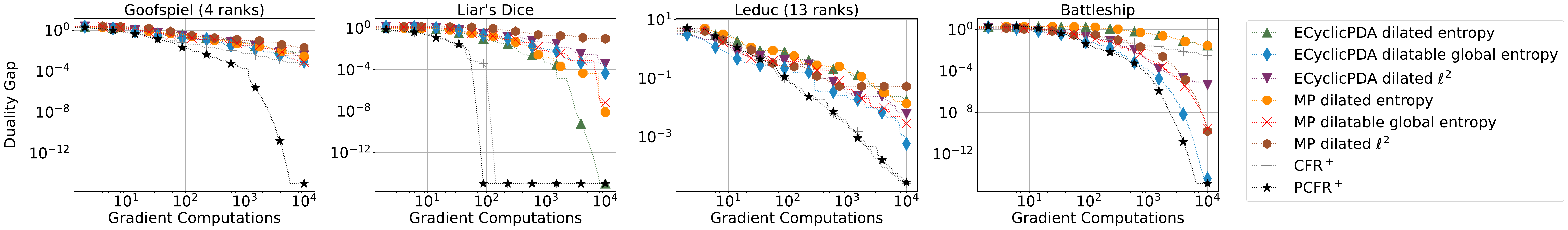}
\caption{Duality gap as a function of the number of full (or equivalent) gradient computations for when restarting is applied to \ecpda{}, \mprox{}, \cfrp{}, and \pcfrp{}, using a quadratic averaging scheme for \ecpda{} and \mprox{}. We take the best duality gap seen so far so that the plot is monotonic.}
\label{fig:quadratic_restarts}
\end{figure}

\end{document}